\newtheoremstyle{mythm}{3pt}{3pt}{}{16pt}{\bfseries}{:}{.5em}{}
\theoremstyle{mythm}
\newtheorem{theorem}{Theorem}
\newtheorem{example}{Example}
\newtheorem{remark}{Remark}
\newtheorem{proposition}{Proposition}
\newtheorem{lemma}{Lemma}
\begin{document}
\title{On the Dynamic Centralized Coded Caching Design}
\author{Qiaoling~Zhang, Lei~Zheng, Minquan~Cheng, Qingchun~Chen}
\maketitle

\begin{abstract}
Coded caching scheme provides us an effective framework to realize additional coded multicasting gain by exploiting coding into multiple transmitted signals. The goal of coded caching design is to jointly optimize the placement and delivery scheme so as to minimize the amount of transmitted coded signals. However, few research efforts consider multiple-round coded caching design problem, in which fixed and mobile users may coexist in one network different number of active users present in successive rounds. Obviously, such dynamic network configurations may lead to undesired frequent placement caching content updating at user sides, if we assume coded caching scheme for each round separately. Thus how to tailor the coded caching design, such that the frequent caching content updating in the placement phase can be avoided, and simultaneously retaining full caching gain in delivery phase in multiple rounds will become highly desirable. In this paper, by carefully examining the bipartite graph representation of the coded caching scheme,
a dynamic centralized coded caching design is proposed on the basis of the concatenating-based placement and the saturating matching based delivery scheme. Our analysis unveils that, the proposed dynamic coded caching scheme can realize the flexible coded multicast, and is order optimal.

\begin{IEEEkeywords}
Concatenating based placement, Saturating matching based delivery, Dynamic coded caching scheme.
\end{IEEEkeywords}
\end{abstract}

\section{Introduction}
\IEEEPARstart {T}{he} global mobile data traffics are predicted to be increased sevenfold by 2021, in which over three-forths will be multimedia \cite{cisco}. The ever-increasing mobile data traffic has imposed huge burden to the networks. Recently, the cheaper cache with increasing availability provides us an alternative method to accommodate the explosive data traffic. In fact, by prefetching video contents at the end users, those locally cached contents can be directly served once they are requested, thus with caching the data traffic can be reduced, and this saving is referred to as the local caching gain.

In order to fully exploit the potential benefit of the cache, Maddah-Ali and Niesen proposed a $(K;M;N)$ centralized coded caching scheme (referred to as MN scheme for brevity) in \cite{MN}, in which a single server containing $N$ files with equal length coordinates $K$ users over a shared link, and every user is assumed to be provisioned with an identical cache of size $M$ files \cite{MN}. The coded caching scheme consists of two independent phases, \emph{namely,} placement phase and delivery phase, which is referred to as one \emph{round} in this paper. The placement phase occurs at off-peak hours, in which each user is able to access the server to fulfill its cache without the knowledge of users' demands. If user only prefetches the portions of the files at server, it is called uncoded placement. If the user fulfills its cache with some linear combinations of sub-packets from multiple files, it is the coded placement. Delivery phase follows in peak hours when users' demands are revealed. The server designs and multicasts coded messages through error-free and shared links to a set of users simultaneously, in which the global multicasting gain is maintained. By the end of the delivery phase, each user reconstructs its requested content on the basis of the received coded messages and its own caching contents. For this coded caching scheme, by jointly optimizing the placement and delivery phase, the system traffic load is expected to be minimized for all possible demands.

Motivated by the MN scheme, how to further reduce the required transmission load has attracted many research attentions.
An improved lower bound of the transmission load was derived from the combinatorial problem by optimally labeling the leaves of a directed tree in \cite{GR}. By interference elimination, a new scheme with smaller transmission load was disclosed for the case $K\geq N$ in \cite{T}.
More generally,  the transmission load for various demand patterns was derived by modifying the delivery phase of the MN scheme in \cite{YMA}. Note that  it was shown in \cite{WTP} and \cite{JCLC} that the MN scheme can achieve the minimum transmission load via graph theory and an optimization framework under a specific uncoded placement rule when $K< N$. Moreover, the MN scheme has been extended to different scenario of networks, for instance, the  multi-server systems \cite{Shariatpanahi_16,Mital_17}, D2D networks \cite{Ji_16}, hierarchical networks \cite{Karamchandani_16}, combination networks \cite{Ji_15}, and heterogenous network \cite{Daniel17}.

It should be addressed that, all the aforementioned works considered the coded caching scheme design within one round, namely, there is only one placement-then-delivery operation. Nonetheless, in practical applications, the coded caching system should be devised to operate within multiple rounds, in which the number of users $K$ may be time varying.
For instance, residents (fixed users) and visiting guests (mobile users) may coexist in one network. Intuitively, the residents may stay in network for a long time (multiple rounds)\footnote{If some fixed users request the same file in previous round, the fixed users can be removed from the coded caching design since their traffic requirements have been fulfilled.}, while those visiting guests may dynamically move in or out in different round of coded caching operation. For such a dynamic network, when applying the coded caching scheme to all the users at each round separately, the variations in the participating users may lead to frequent update in both the content caching and the signal transmission in order to make both the placement and the delivery fit the variations. Sometimes, this may become undesirable
and resource inefficient, especially when most of the users are fixed while only few users join or leave.
In this dynamic setup with multiple rounds of service request, how to tailor the coded caching design, such that the content updating in placement phase will be minimized and the full caching gain in delivery phase can be retained, will be a very interesting problem. This is exactly the motivation of our work in this paper.

In order to effectively handle the dynamic coded caching requirement, we need to rethink the content caching in placement phase and the coded signal generation in delivery phase for multiple rounds, such that the content updating at those fixed users can be minimized, while the coded caching gain for all participating users in delivery phase can be maximized. Intuitively the more users join the network in the same round, the possibility of the larger coding gain achieved should be higher. Therefore, all the set of fixed and mobile users should be considered when we design a coded caching scheme. However, in practice we do not have any knowledge of mobile users in the forthcoming rounds. To handle this issue, in this paper, we propose a Concatenating based placement and the Saturating Matching based delivery design (CSM) without the knowledge of mobile users. In the placement, the concatenating method is involved, in fact it has been widely used to cope with asynchronous problems. With this concatenating method we can keep the cache content unchanged for those fixed users who have already participated in the previous round of coded cooperation. For those newly joined mobile users, the server only needs to decide on the cache content placement by further sub-dividing the packets utilized by the fixed users. In this way, we can minimize the amount of the content updating. Since the matching over bipartite graph allows us to get the sum of the coded multicast transmissions from different groups (i.e., fixed and mobile users). Motivated by this, the saturating matching based delivery scheme is proposed. Our analysis reveals that the proposed CSM coded caching scheme is order-optimal.

The rest of the paper is organized as follows. In Section \ref{sec-pre}, the system model and some results of the original coded caching system in \cite{MN} are reviewed to introduce the $(K_1,K_2;M_1,M_2;N)$ dynamic coded caching design problem. Then the proposed CSM coded caching scheme and its order-optimality are presented in Section \ref{sec-the main results-1} and \ref{sec-first-performance}, respectively. Finally, we conclude our work in Section \ref{sec-conclusion}.

\section{System Model and Problem Formulation}\label{sec-pre}
\subsection{The Centralized Coded Caching Model}
Let us consider the centralized coded caching system (Fig. \ref{fig-origin-system}),
\begin{figure}[h]
\centering
\includegraphics[width=3.5in]{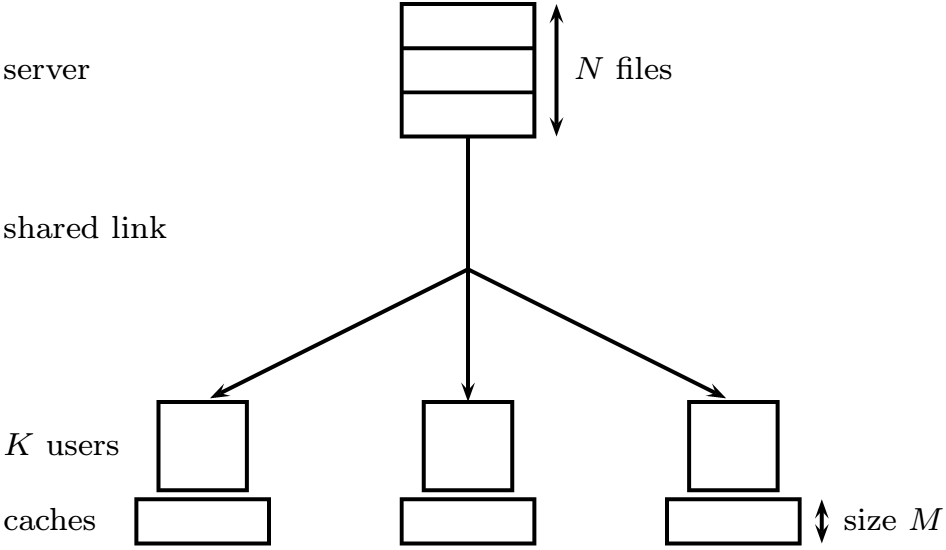}
\vskip 0.2cm
\caption{$(K;M;N)$ coded caching system}\label{fig-origin-system}
\end{figure}in which a server containing $N$ files denoted by $\mathcal{W}=\{W_{0}, W_{1}, \ldots, $ $W_{N-1}\}$ connects through an error-free shared link to $K$ users $\mathcal{K}=\{0,1,\ldots,K-1\}$ with $K<N$, and every user has a cache of size $M$ files for $M\in[0,N]$. The system contains two independent phases,
\begin{itemize}
  \item Placement phase: each file is divided into $F$ packets of equal size, and then  each user $k\in\mathcal{K}$ caches some packets out of each file, which is limited by its cache size $M$. 
  Let $\mathcal{Z}_k$ denote the cache contents at user $k$, which is assumed to be known to the server.
  \item Delivery phase: each user randomly requests one file from the server. The requested file by user $k$ is represented by $W_{d_k}$, and the request vector by all users is denoted by $\mathbf{d}=(d_0,d_1,\ldots,d_{K-1})$. 
 By the caching contents and requests from all the users, the server transmits some coded messages to all users such that each user's request can be satisfied.
\end{itemize}
In such system, the amount of worst-case transmissions for all possible requests is expected to be as small as possible, which is defined as
\begin{eqnarray*}
R=\sup_{\tiny{\mbox{$\begin{array}{c}
                 \mathbf{d}\in[N]^K\\
               \end{array}
$}}} \left\{\frac{S_{\mathbf{d}}}{F}\right\}
\end{eqnarray*} where $S_{\mathbf{d}}$ is the size of transmitted messages in the delivery phase for any given request $\mathbf{d}$.
The above $(K;M;N)$ coded caching system was firstly proposed in \cite{MN}, and its performance is given in Lemma \ref{le-MN}.%by Maddah-Ali and Niesenin
\begin{lemma}\rm(MN scheme \cite{MN})
\label{le-MN}
For any positive integers $K$, $M$ and $N$ with $t=\frac{KM}{N}\in \mathbb{N}^+$, there exists a $(K;M;N)$ scheme with transmission rate $R_{MN}(K;M;N)=\frac{K-t}{t+1}$.
\end{lemma}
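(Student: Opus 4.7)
The plan is to construct an explicit scheme achieving the claimed rate by coupling a combinatorial subset-indexed placement with an XOR-based delivery. Setting $t = KM/N \in \mathbb{N}^+$, the first step is to split each of the $N$ files into $F = \binom{K}{t}$ equally-sized subpackets labeled by the $t$-subsets of $\mathcal{K}$, so that $W_n = \{W_{n,\mathcal{T}} : \mathcal{T}\subseteq\mathcal{K},\ |\mathcal{T}|=t\}$, and to let each user $k$ store $\mathcal{Z}_k = \{W_{n,\mathcal{T}}: n\in[N],\ k\in\mathcal{T}\}$. A short counting argument will confirm that each user caches $N\binom{K-1}{t-1}$ subpackets in total, i.e.\ $Nt/K = M$ files worth; this obeys the per-user memory budget exactly and is routine.

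Next, for the delivery phase, the idea is to associate one coded transmission with every $(t+1)$-subset $\mathcal{S}\subseteq\mathcal{K}$, namely $X_{\mathcal{S}} = \bigoplus_{k\in\mathcal{S}} W_{d_k,\,\mathcal{S}\setminus\{k\}}$. The central verification is decodability: for each $k\in\mathcal{S}$, I would observe that every other term $W_{d_j,\mathcal{S}\setminus\{j\}}$ with $j\in\mathcal{S}\setminus\{k\}$ satisfies $k\in\mathcal{S}\setminus\{j\}$, so by the placement rule that subpacket already sits in $\mathcal{Z}_k$. User $k$ can therefore cancel all but one term of $X_{\mathcal{S}}$ and recover its missing subpacket $W_{d_k,\mathcal{S}\setminus\{k\}}$. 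Letting $\mathcal{S}$ vary over all $(t+1)$-subsets containing $k$ yields every subpacket $W_{d_k,\mathcal{T}}$ with $k\notin\mathcal{T}$, which together with the cached subpackets reconstructs the full file $W_{d_k}$.

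Finally, the rate calculation is a straightforward count: there are $\binom{K}{t+1}$ transmissions, each of size $1/F = 1/\binom{K}{t}$ of a file, giving
\begin{equation*}
R_{MN}(K;M;N) \;=\; \frac{\binom{K}{t+1}}{\binom{K}{t}} \;=\; \frac{K-t}{t+1},
\end{equation*}
uniformly in the demand vector $\mathbf{d}\in[N]^K$, so the supremum in the definition of $R$ is attained trivially. The only step that carries any real content is the decodability verification in the delivery phase; the placement count and the rate count are bookkeeping. The integrality hypothesis $t\in\mathbb{N}^+$ is used exactly to ensure that $\binom{K}{t}$ subpackets suffice and that the per-user cache is filled without waste, so no additional memory-sharing argument is required here.
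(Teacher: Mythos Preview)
Your proposal is correct and follows essentially the same approach as the paper: the paper presents exactly this subset-indexed placement and $(t{+}1)$-subset XOR delivery via Algorithm~\ref{alg-MN}, and then computes the rate as $\binom{K}{t+1}/\binom{K}{t}=(K-t)/(t+1)$. If anything, you spell out the memory-constraint check and the decodability argument more explicitly than the paper does, but the construction and the rate count are identical.
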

For better understanding, the sketch of MN scheme is depicted by Algorithm \ref{alg-MN}, in which placement and delivery phase are included.
\begin{algorithm}[htb]
\caption{MN Scheme \cite{MN}}\label{alg-MN}
\begin{algorithmic}[1]
\Procedure {Placement}{$\mathcal{K}$, $\mathcal{W}$}
\State $t\leftarrow\frac{KM}{N}$
\State $\mathfrak{T}\leftarrow\{\mathcal{T}| \mathcal{T}\subset \mathcal{K},|\mathcal{T}|=t\}$
\For{$n\in[0,N)$}
\State Split $W_n$ into $W_n=\{W_{n,\mathcal{T}}|\mathcal{T}\in\mathfrak{T}\}$ of equal sized packet
\EndFor
\For{$k\in\mathcal{K}$}
\State $\mathcal{Z}_k\leftarrow\{W_{n,\mathcal{T}}|n\in[0,N),\mathcal{T}\in\mathfrak{T},k\in\mathcal{T}\}$
\EndFor
\EndProcedure
\Procedure{Delivery}{$\mathcal{W}$, $\mathbf{d}$}
\State $\mathfrak{S}\leftarrow\{\mathcal{S}|\mathcal{S}\subset\mathcal{K},|\mathcal{S}|=t+1\}$
\State Server sends $\{\oplus_{k\in\mathcal{S}} W_{d_k,\mathcal{S}\backslash \{k\}}|\mathcal{S}\in\mathfrak{S}\}$
\EndProcedure
\end{algorithmic}
\end{algorithm}

From Algorithm \ref{alg-MN}, it is clear that each file is divided into $\binom{K}{t}$ nonoverlapping equal-sized packets, and for a given $t$, there are in total $\binom{K}{t+1}$ coded messages. To sum up, the transmission rate of MN scheme can thus be derived as,
\begin{eqnarray*}
\label{eq-MN-rate}
R_{MN}(K;M;N)=\frac{{K\choose t+1}}{{K\choose t}}=\frac{K-t}{t+1}.
\end{eqnarray*}

\subsection{Dynamic Coded Caching Problem Formulation}
As illustrated in Fig.\ref{Fig-sys}, now we consider a similar network configuration as that in the above $(K;M;N)$ model, except the fact that there are two sets of users, namely, $K_1$ fixed users $\mathcal{K}_1=\{0,1,\ldots,K_1-1\}$, each of which has a cache of size $M_{1}$ files, and $K_2$ mobile users $\mathcal{K}_2=\{K_1,K_1+1,\ldots\}$, each of which has a cache of size $M_{2}$ files.
It should be noted that, the cache sizes of two user sets $\mathcal{K}_1$ and $\mathcal{K}_2$ are not necessarily the same. For notation brevity, throughout the paper, we refer to this model as $(K_1,K_2;M_1,M_2;N)$ coded caching system.
\begin{figure}[h]
\centering
\includegraphics[width=4in]{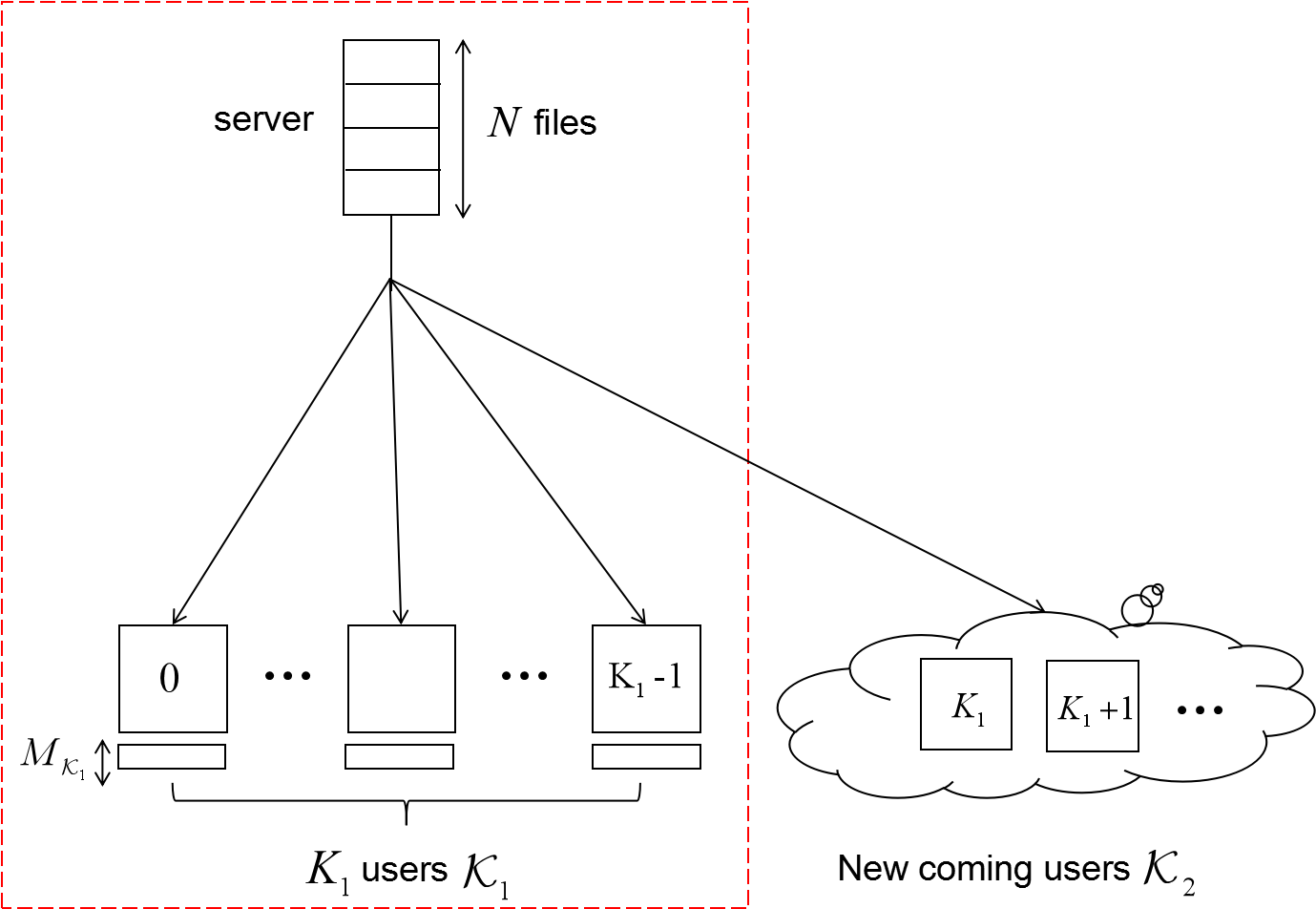}
\vskip 0.2cm
\caption{$(K_1,K_2;M_{1},M_{2};N)$ coded caching model}\label{Fig-sys}
\end{figure}
The above $(K_1,K_2;M_1,M_2;N)$ coded caching system can be utilized to characterize the dynamic network at any round $i$, in which $K_2$ mobile users join this system and want to perform coded caching with the $K_1$ fixed users that are already in the system. The first problem is related to the placement design. We aim to design a placement scheme intended only for the $K_2$ users $\mathcal{K}_2$ in order to minimize the caching content update, which makes sense when there is no change in the files at server, and those $K_1$ fixed users have already filled their caches in previous rounds\footnote{Note that if some new users move into network during round $i$, their cache will be filled in placement phase of the following round, \emph{i.e.}, round $i+1$.}. On the basis of designed caching contents for users in $\mathcal{K}_2$, how to design the coded multicasting among all the $K_1+K_2$ users to minimize the required transmission load will be the second design objective.

In order to derive the placement and delivery scheme suitable for the aforementioned dynamic coded caching applications, the bipartite graph representation of coded caching will be utilized, and the following symbol notations are utilized in the following analysis. Basically, a graph can be denoted by $\mathbf{G}=(\mathcal{V},\mathcal{E})$, where $\mathcal{V}$ is the set of vertices and $\mathcal{E}$ is the set of edges, and a subset of edges $\mathcal{M}\subseteq \mathcal{E}$ is a matching if no two edges have a common vertex. A bipartite graph, denoted by $\mathbf{G}=(\mathcal{X},\mathcal{Y}; \mathcal{E})$, is a graph whose vertices are divided into two disjoint parts $\mathcal{X}$ and $\mathcal{Y}$ such that every edge in $\mathcal{E}$ connects a vertex in $\mathcal{X}$ to one in $\mathcal{Y}$. For a set $X\subseteq \mathcal{X}$, let $N_{\mathbf{G}}(X)$ denote the set of all vertices in $\mathcal{Y}$ adjacent to some vertices of $X$. The degree of a vertex is defined as the number of vertices adjacent to it. If every vertex of $\mathcal{X}$ has the same degree, we  call it the degree of $\mathcal{X}$ and denote as $d(\mathcal{X})$.

\section{The Concatenating based Placement and the Saturating Matching based Delivery Scheme}\label{sec-the main results-1}
In this section, we mainly focus on the $(K_1,K_2;M_1,M_2;N)$ coded caching design, and its achieved transmission rate for any $(M_1,M_2)$ pairs. Before introducing our proposed scheme, intuitively, the original MN scheme can be simply applied into our considered system by regarding $\mathcal{K}_1$ and $\mathcal{K}_2$ as two separate user groups. Thus, for the convenience of comparison, we firstly give the following trivial examples when directly utilizing MN scheme.

\subsection{Baseline Scheme}
Intuitively, when adopting MN scheme without updating the cache contents of users in $\mathcal{K}_1$, all the $K_1+K_2$ users can be classified into two groups, $\mathcal{K}_1$ and  $\mathcal{K}_2$, and perform MN coded caching scheme for two groups separately, which will be introduced in the following Example \ref{ex-trivial}.
\begin{example}\rm
\label{ex-trivial}
Consider a $(K_1,K_2;M_1,M_2;N)$ system consisting of $K_1=4$ fixed users $\mathcal{K}_1=\{0,1,2,3\}$, $K_2=3$ mobile user $\mathcal{K}_2=\{4,5,6\}$ and $N=7$ files $\mathcal{W}=\{W_0,W_1,\ldots,W_{6}\}$. When assume that  $M_{1}=\frac{21}{4}$, $M_{2}=\frac{14}{3}$,  we have $t_1=3$, and $t_2=2$, respectively.

Firstly, by applying the placement scheme in Algorithm \ref{alg-MN} to all users in $\mathcal{K}_1$, each file is split into $4$ packets of equal size, \emph{i.e.},
\begin{eqnarray}
\label{exm-K=2-2}
W_n=\{W_{n,\{0,1,2\}}, W_{n,\{0,1,3\}}, W_{n,\{0,2,3\}}, W_{n,\{1,2,3\}}\}, \  n=0,1,\ldots,6.
\end{eqnarray}
and the four fixed users in $ \mathcal{K}_1$ will cache the following contents in placement phase,
\begin{eqnarray}
\label{ex-k1-mn-p}
\begin{split}
\mathcal{Z}_{0}&=\{ W_{n,\{0,1,2\}}, W_{n,\{0,1,3\}}, W_{n,\{0,2,3\}}\}_{n=0}^6,\ \
\mathcal{Z}_{1}=\{ W_{n,\{0,1,2\}}, W_{n,\{0,1,3\}}, W_{n,\{1,2,3\}}\}_{n=0}^6,\\
\mathcal{Z}_{2}&=\{ W_{n,\{0,1,2\}}, W_{n,\{0,2,3\}}, W_{n,\{1,2,3\}}\}_{n=0}^6,\ \
\mathcal{Z}_{3}=\{ W_{n,\{0,1,3\}}, W_{n,\{0,2,3\}}, W_{n,\{1,2,3\}}\}_{n=0}^6.
\end{split}
\end{eqnarray}
Similarly for the three users in $\mathcal{K}_2$, each file will be split into $3$ packets of the same size, \emph{i.e.},
\begin{eqnarray*}
W_n=\{W_{n,\{4,5\}},  W_{n,\{4,6\}}, W_{n,\{5,6\}}\},\  n=0,1,\ldots,6
\end{eqnarray*}
and the users in $\mathcal{K}_2$ will cache the following contents in placement phase,
\begin{eqnarray*}
\mathcal{Z}_{4}=\{ W_{n,\{4,5\}}, W_{n,\{4,6\}}\}_{n=0}^6,
\mathcal{Z}_{5}=\{ W_{n,\{4,5\}}, W_{n,\{5,6\}}\}_{n=0}^6,
\mathcal{Z}_{6}=\{ W_{n,\{4,6\}}, W_{n,\{5,6\}}\}_{n=0}^6.
\end{eqnarray*}

Without loss of generality, assume $\mathbf{d}=\{0,1,2,3,4,5,6\}$. By using the delivery scheme in Algorithm \ref{alg-MN}, we have $\mathcal{S}_1=\{0,1,2,3\}$ and $\mathcal{S}_2=\{4,5,6\}$, then the sever respectively sends
    $$\mathop{\oplus}_{k_1\in \mathcal{S}_1}W_{d_{k_1},\mathcal{S}_1\setminus\{k_1\}}\ \ \ \hbox{and}\ \ \ \mathop{\oplus}_{k_2\in \mathcal{S}_2}W_{d_{k_2},\mathcal{S}_2\setminus\{k_2\}}.$$
    to the users in $\mathcal{K}_1$ and $\mathcal{K}_2$.

By the Lemma \ref{le-MN}, we have $R_{MN}(4;\frac{21}{4};7)=\frac{1}{4}$ for $\mathcal{K}_1$ and $R_{MN}(3;\frac{14}{3};7)=\frac{1}{4}$ for $\mathcal{K}_2$ respectively. Then to sum up, the total transmission rate is $\frac{1}{4}+\frac{1}{3}=\frac{7}{12}$.
\end{example}

From the above Example \ref{exm-K=2-2}, it can be observed that, when directly applying MN scheme to the $(K_1,K_2;M_1,M_2;N)$ system, users need to be divided into two groups, and MN scheme $(K_1;M_1;N)$ and $(K_2;M_2;N)$ are utilized in each group, respectively. Although, this strategy is applicable, during delivery, the multicasting opportunities between users in $\mathcal{K}_1$ and $\mathcal{K}_2$ are lost, which is less efficient. At this point, how to design a scheme to maximally exploit the multicasting gain among all the users would be an interesting problem, and is worth to be investigagted.

\subsection{The Proposed Dynamic Coded Caching Design and The Main Results}
Unlike the application of the MN scheme to two user sets separately, for the $(K_1,K_2;M_{1},M_{2};N)$ caching system, we propose a new design by concatenating two user groups, such that the coding gain can be enlarged. More specifically, the coded multicasting gain in our proposed scheme is $\frac{M_1K_1}{N}+\frac{M_2K_2}{N}$. When $M_1=M_2$, our coded gain is only one gain less than the maximum gain of the MN scheme $(K_1+K_2;M_1;N)$. The transmission rate of our proposed scheme is given by Theorem \ref{th-new-r}.
\begin{theorem}\rm
\label{th-new-r}
For the positive integers $K_i$, $N$ and $M_{i}<N$ such that $t_i=\frac{K_iM_{i}}{N}\in [1,K_i)$, $i=1,2$, there exists a  $(K_1,K_2;M_{1},M_{2};N)$ coded caching scheme to achieve the following transmission rate
\begin{eqnarray*}
R(K_1,K_2;M_1,M_2;N)\leq\begin{cases}
\frac{(K_1-t_1)(t_1t_2+t_1+1)}{(t_1+1)t_1t_2}+\frac{(K_2-t_2)}{(t_1+1)t_2},&\text{if $M_1<M_2$},\\[0.2cm]
\frac{(K_1-t_1)}{(t_2+1)t_1}+\frac{(K_2-t_2)(t_1t_2+t_2+1)}{(t_2+1)t_1t_2},&\text{if $M_1\geq M_2$}.
\end{cases}
\end{eqnarray*}
\end{theorem}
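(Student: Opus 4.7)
The plan is to exhibit an explicit $(K_1,K_2;M_1,M_2;N)$ scheme whose worst-case rate equals the stated upper bound, by implementing the CSM blueprint in two phases.

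\textbf{Placement.} I would partition each file $W_n$ into $\binom{K_1}{t_1}\binom{K_2}{t_2}$ equal-sized sub-packets $W_{n,\mathcal{T}_1,\mathcal{T}_2}$ indexed by pairs $(\mathcal{T}_1,\mathcal{T}_2)$ with $\mathcal{T}_1\in\binom{\mathcal{K}_1}{t_1}$ and $\mathcal{T}_2\in\binom{\mathcal{K}_2}{t_2}$, then cache $W_{n,\mathcal{T}_1,\mathcal{T}_2}$ at fixed user $k_1\in\mathcal{K}_1$ exactly when $k_1\in\mathcal{T}_1$ and at mobile user $k_2\in\mathcal{K}_2$ exactly when $k_2\in\mathcal{T}_2$. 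A direct count confirms the $M_1,M_2$ cache budgets. Because the caching rule for each group depends only on its own index, this is a concatenation of two MN placements: fixed users who joined in earlier rounds re-use exactly the content they already have, with only a finer relabelling once the mobile users arrive --- precisely the dynamic property required.

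\textbf{Delivery via bipartite matching.} For a demand $\mathbf{d}$, introduce the two MN-style sub-signal families $F(\mathcal{S}_1,\mathcal{T}_2)=\bigoplus_{k\in\mathcal{S}_1}W_{d_k,\mathcal{S}_1\setminus\{k\},\mathcal{T}_2}$ for $|\mathcal{S}_1|=t_1+1$, $|\mathcal{T}_2|=t_2$, and the symmetric $M(\mathcal{T}_1,\mathcal{S}_2)$. Sent independently these families yield the baseline rate $(K_1-t_1)/(t_1+1)+(K_2-t_2)/(t_2+1)$ of Example \ref{ex-trivial}; the goal is to \emph{combine} as many $F$/$M$ pairs as possible. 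I would build the bipartite graph $\mathbf{G}=(\mathcal{X},\mathcal{Y};\mathcal{E})$ with $\mathcal{X}$ the set of $F$-signals, $\mathcal{Y}$ the set of $M$-signals, and an edge between $F(\mathcal{S}_1,\mathcal{T}_2)$ and $M(\mathcal{T}_1,\mathcal{S}_2)$ precisely when $\mathcal{T}_1\subset\mathcal{S}_1$ and $\mathcal{T}_2\subset\mathcal{S}_2$. A direct count gives $d(\mathcal{X})=(t_1+1)(K_2-t_2)$ and $d(\mathcal{Y})=(K_1-t_1)(t_2+1)$, so $\mathbf{G}$ is biregular and Hall's theorem furnishes a matching $\mathcal{M}$ saturating the side of smaller cardinality. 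Each matched edge is then transmitted as one combined signal, each unmatched vertex as a stand-alone MN sub-signal.

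\textbf{Counting and case split.} With sub-packet size $1/[\binom{K_1}{t_1}\binom{K_2}{t_2}]$, the achieved rate equals the total number of surviving sub-signals divided by this denominator. In the regime $M_1<M_2$ a short combinatorial check will show that $\mathcal{X}$ is the saturated side, so every $F$-signal is absorbed into a combined transmission while only the unmatched $M$-signals remain solo; plugging these counts into the rate formula and applying the identities $\binom{K_i}{t_i+1}/\binom{K_i}{t_i}=(K_i-t_i)/(t_i+1)$ collapses the sum to the first branch of the theorem. The case $M_1\geq M_2$ then follows by interchanging the two user groups, which swaps the roles of $(K_1,t_1)$ and $(K_2,t_2)$.

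\textbf{Main obstacle.} The crux is showing that each combined transmission is actually decodable by all $t_1+t_2+2$ participating users. A naive XOR $F(\mathcal{S}_1,\mathcal{T}_2)\oplus M(\mathcal{T}_1,\mathcal{S}_2)$ fails at the single fixed user in $\mathcal{S}_1\setminus\mathcal{T}_1$, who has no cached side information to cancel the mobile part, and symmetrically at the single mobile user in $\mathcal{S}_2\setminus\mathcal{T}_2$. Handling these two excess users requires either augmenting the XOR with carefully chosen auxiliary packets or introducing a small set of corrective mini-transmissions; either way, the extra load is what produces the factor $(t_1t_2+t_1+1)$ in the numerator of the first summand of the bound, and symmetrically $(t_1t_2+t_2+1)$ in the other branch. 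Getting both the decodability argument and the exact bookkeeping of this overhead to line up with the two branches of the theorem is the main technical hurdle.
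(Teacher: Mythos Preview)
Your outline captures the high-level CSM architecture, but it stops short precisely where the real work lies, and the gap is not cosmetic. You correctly identify that the naive XOR $F(\mathcal{S}_1,\mathcal{T}_2)\oplus M(\mathcal{T}_1,\mathcal{S}_2)$ is undecodable at the two ``excess'' users, and you then defer the resolution to unspecified ``auxiliary packets or corrective mini-transmissions.'' That resolution is the entire content of the proof, and the paper's mechanism is quite different from anything your setup can accommodate.

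The paper does \emph{not} try to make a combined signal serve $t_1+t_2+2$ users. Instead it redesigns the sub-signals so that each combined transmission serves exactly $t_1+t_2$ users, and compensates with multiplicity. Concretely: (i) each file is split into $t_1t_2\binom{K_1}{t_1}\binom{K_2}{t_2}$ sub-packets, an extra factor $t_1t_2$ beyond your placement; (ii) the $X$-vertices are not full $(t_1{+}1)$-term MN XORs but $t_1$-term XORs $X^{(l_2,m_1)}_{\mathcal{S}_1,\mathcal{B}}$ that deliberately \emph{omit} the summand for user $k_{1,m_1}\in\mathcal{S}_1$, and likewise each $Y$-vertex omits one mobile user; the superscripts $(l_2,m_1)$ and $(m_2,l_1)$ index the $t_1t_2$ copies; (iii) an edge joins $X^{(l_2,m_1)}_{\mathcal{S}_1,\mathcal{B}}$ to $Y^{(m_2,l_1)}_{\mathcal{A},\mathcal{S}_2}$ only when $\mathcal{A}=\mathcal{S}_1\setminus\{k_{1,m_1}\}$ and $\mathcal{B}=\mathcal{S}_2\setminus\{k_{2,m_2}\}$, so the combined XOR is decodable by every user in $\mathcal{A}\cup\mathcal{B}$ precisely because the two problematic users were never included; (iv) the unmatched $Y$-vertices are not sent one by one but in blocks of $t_2{+}1$ encoded by a rank-$t_2$ matrix, costing $t_2$ transmissions per block. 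The factors $t_1t_2+t_1+1$ and $t_1t_2+t_2+1$ in the rate come out of counting these reduced-XOR vertices and the block encoding of the residue, not from corrective patches to full MN XORs.

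Your bipartite graph, vertex definitions, and sub-packetization would all have to be rebuilt along these lines; as written, the proposal does not contain a scheme whose rate can be computed, let alone matched to the theorem.
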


The proof of Theorem \ref{th-new-r} is presented in Appendix. The placement and the delivery scheme of the proposed CSM coded caching scheme are given in Algorithm \ref{alg-NSP} and Algorithm \ref{alg-NSD} respectively.
\begin{small}
\begin{algorithm}[htp]
\caption{Concatenating Placement Design of The $(K_1,K_2;M_1,M_2;N)$ Dynamic Coded Caching Scheme}\label{alg-NSP}
\begin{algorithmic}[1]
\Procedure {Placement 1 } {$\mathcal{K}_1$, $M_1$, $\mathcal{W}$}  
\State $t_1 \leftarrow \frac{M_{1}K_1}{N}$
\State $\mathfrak{A} \leftarrow \{\mathcal{A}|\mathcal{A}\subset \mathcal{K}_1, |\mathcal{A}|=t_1\}$
\State Split each file $W_i\in\mathcal{W}$ into ${K_1 \choose t_1}$ packets, \emph{i.e.}, $W_{i}=\{W_{i,\mathcal{A}}\ |\ \mathcal{A}\in \mathfrak{A}\}$.
\For{$k_1\in \mathcal{K}_1$}
\State $\mathcal{Z}_{k_1}\leftarrow\left \{W_{i,\mathcal{A}}\ |\ k_1\in\mathcal{A},\mathcal{A}\in\mathfrak{A},i\in[0,N)\right\}$
\EndFor
\EndProcedure
\Procedure {Placement 2 }{$\mathcal{K}_2$, $M_2$, $\mathcal{W}$}
\State $t_2 \leftarrow \frac{M_{2}K_2}{N}$
\State $\mathfrak{B}\leftarrow \{\mathcal{B}|\mathcal{B}\subset \mathcal{K}_2, |\mathcal{B}|=t_2\}$
\State Split each packet $W_{i,\mathcal{A}}$ into $t_1t_2{K_2 \choose t_2}$ sub-packets, \emph{i.e.}, $$W_{i,\mathcal{A}}=\left \{W_{i,\mathcal{A}, \mathcal{B}}^{(l_2,l_1)}\ |\ \mathcal{A}\in \mathfrak{A}, \mathcal{B}\in\mathfrak{B}, l_1\in[0,t_1), l_2\in[0,t_2)\right\}.$$
\For{$k_2\in \mathcal{K}_2$}
\State $\mathcal{Z}_{k_2}\leftarrow\left \{W_{i,\mathcal{A},\mathcal{B}}^{(l_2,l_1)}\ |\ k_2\in\mathcal{B}, \mathcal{B}\in\mathfrak{B}, \mathcal{A}\in\mathfrak{A}, i\in[0,N), l_1\in[0,t_1), l_2\in[0,t_2)\right\}$
\EndFor
\EndProcedure
\end{algorithmic}
\end{algorithm}
\end{small}

In Algorithm \ref{alg-NSP}, the PLACMENT 1 is for the users in $\mathcal{K}_1$, which may be performed in any previous rounds, and the PLACMENT 2 is designed for those mobile users in $\mathcal{K}_2$. It can be observed that PLACMENT 1 is independent of PLACMENT 2, while that for users in $\mathcal{K}_2$ depends on the number of users in $\mathcal{K}_1$ to maximally utilize the coded multicasting opportunities among all the users. After this placement phase, the cache contents of all the users denoted by $\mathcal{Z}$ are assumed to be known by the server. Then, the delivery phase follows and is described in Algorithm \ref{alg-NSD}.

\begin{small}
\begin{algorithm}[htp]
\caption{The Saturating Matching Delivery Design for The $(K_1,K_2;M_1,M_2;N)$ Dynamic Coded Caching Scheme}\label{alg-NSD}
\begin{algorithmic}[1]
 \Procedure{Find Vertices of Bipartite Graph }{$\mathcal{Z}$, ${\bf d}$, $\mathcal{W}$}
\State $\mathfrak{S}_1\leftarrow \{\mathcal{S}_1| \mathcal{S}_1\subseteq \mathcal{K}_1, |\mathcal{S}_1|=t_1+1\}$
\State $\mathfrak{S}_2\leftarrow \{\mathcal{S}_2| \mathcal{S}_2\subseteq \mathcal{K}_2, |\mathcal{S}_2|=t_2+1\}$
\For{$\mathcal{S}_1=\{k_{1,0},k_{1,1},\ldots,k_{1,t_1}\}\in \mathfrak{S}_1$, $\mathcal{B}\in \mathfrak{B}$}
\For{$l_2\in [0,t_2)$, $m_1\in [0,t_1]$}
\State
{\begin{align}
X^{(l_2,m_1)}_{\mathcal{S}_{1},\mathcal{B}}\triangleq
\left(\bigoplus\limits_{m'_1=0}^{m_1-1} W^{(l_2,m_1-1)}_{d_{k_{1,m_1'}},\mathcal{S}_1\setminus\{k_{1,m'_1}\},\mathcal{B}}\right)
\bigoplus
\left(
\bigoplus\limits_{m'_1=m_1+1}^{t_1} W^{(l_2,m_1)}_{d_{k_{1,m_1'}},\mathcal{S}_1\setminus\{k_{1,m'_1}\},\mathcal{B}}\right)
\end{align}\label{eq-ex-step1-0}}
\EndFor
\EndFor

\For{$\mathcal{S}_2=\{k_{2,0},k_{2,1},\ldots,k_{2,t_1}\}\in \mathfrak{S}_2$, $\mathcal{A}\in \mathfrak{A}$ }
\For{ $l_1\in[0,t_1)$, $m_2\in [0,t_2]$ }
\begin{eqnarray}
\label{eq-ex-step1-1}
Y^{(m_2,l_1)}_{\mathcal{A},\mathcal{S}_2 }\triangleq \left(\bigoplus\limits_{m'_2=0}^{m_2-1} W^{(m_2-1,l_1)}_{d_{k_{2,m'_2}},\mathcal{A},\mathcal{S}_2\setminus\{k_{2,m'_2}\}}\right)\bigoplus\left(
\bigoplus\limits_{m'_2=m_2+1}^{t_2} W^{(m_2,l_1)}_{d_{k_{2,m'_2}},\mathcal{A},\mathcal{S}_2\setminus\{k_{2,m'_2}\}}\right)
\end{eqnarray}
\EndFor
\EndFor
\EndProcedure
\Procedure{Define Bipartite Graph $\mathbf{G}=(\mathcal{X},\mathcal{Y}; \mathcal{E})$}{}
\State { Define $\mathbf{G}=(\mathcal{X},\mathcal{Y}; \mathcal{E})$ such that $X^{(l_2,m_1)}_{\mathcal{S}_{1},\mathcal{B}}\in \mathcal{X}$ is adjacent to $Y^{(m_2,l_1)}_{\mathcal{A},\mathcal{S}_2 }\in \mathcal{Y}$ if and only if $\mathcal{S}_{1}\setminus\{k_{1,m_1}\}=\mathcal{A}$ and $\mathcal{S}_{2}\setminus\{k_{2,m_2}\}=\mathcal{B}$, where}\label{line:req}
 $$\mathcal{X}=\left\{X^{(l_2,m_1)}_{\mathcal{S}_{1},\mathcal{B}}\ |\ \mathcal{S}_1\in \mathfrak{S}_1, \mathcal{B}\in \mathfrak{B},l_2\in [0,t_2), m_1\in [0,t_1]\right\}$$
%and
$$\mathcal{Y}=\left\{Y^{(m_2,l_1)}_{\mathcal{A},\mathcal{S}_2 }\ |\ \mathcal{S}_2\in \mathfrak{S}_2, \mathcal{A}\in \mathfrak{A},l_1\in [0,t_1), m_2\in [0,t_2]\right\}$$
\EndProcedure
\Procedure {Delivery }{$\mathbf{G}=(\mathcal{X},\mathcal{Y}; \mathcal{E})$}
\State {Find an appropriate saturating matching of $\mathbf{G}$ for $\mathcal{X}$ (or $\mathcal{Y}$), denoted by $\mathbf{G}''=(\mathcal{X},\mathcal{Y}''; \mathcal{E}'')$, such that the remaining vertices in $\mathcal{Y}$ or ($\mathcal{X}$) as in \eqref{eq-ex-step1-1} can be sent with multicasting gain $t_2+1$ ($t_1+1$) as much as possible}\label{line:match}
\For { each edge $e=\left(X^{(l_2,m_1)}_{\mathcal{S}_{1},\mathcal{B}},Y^{(m_2,l_1)}_{\mathcal{A},\mathcal{S}_2 }\right)\in \mathcal{E}''$}
\State Sends
$X^{(l_2,m_1)}_{\mathcal{S}_{1},\mathcal{B}}\bigoplus Y^{(m_2,l_1)}_{\mathcal{A},\mathcal{S}_2 }$\label{line:send}
\EndFor
\State Sends remaining vertices in $\mathcal{Y}$ ($\mathcal{X}$) with $t_2$ ($t_1$) for every $t_2+1$ ($t_1+1$) vertices.
\EndProcedure
\end{algorithmic}
\end{algorithm}
\end{small}

From the Algorithm \ref{alg-NSD}, it can be observed that, the whole delivery phase can be realized by firstly finding the vertices of the bipartite graph, and then constructing the bipartite graph according to the requirements described in Line \ref{line:req}. Finally based on the constructed graph we can find out the coded messages as described in Procedure Delivery. The working flow of the Algorithm \ref{alg-NSD} can be briefly summarized as below: in Lines 2-13, we try to generate the vertices of a bipartite graph by means of the required sub-packets, i.e., to create the multicasting gains $\frac{M_1K_1}{N}$ and $\frac{M_2K_2}{N}$ respectively;  in Line \ref{line:req} we can define its edge, i.e., to create the total coded multicasting gain $\frac{M_1K_1}{N}+\frac{M_2K_2}{N}$ according to the requests of users; Finally in Procedure Delivery, based on the saturating matching, the messages with multicasting gain $\frac{M_1K_1}{N}+\frac{M_2K_2}{N}$ are broadcast as much as possible.

It is worth noting that, in the Proposition \ref{pro-baohe-X} in APPENDIX, we show that there always exists a saturating matching of $\mathbf{G}$ for $\mathcal{X}$ (or $\mathcal{Y}$), denoted by $\mathbf{G}''=(\mathcal{X},\mathcal{Y}''; \mathcal{E}'')$, such that the remaining sub-packets in $\mathcal{Y}$ (or $\mathcal{X}$) as in \eqref{eq-ex-step1-1} can be sent with multicasting gain of $t_2+1$ ($t_1+1$) as much as possible, i.e., Line 18-22 work.
\subsection{The Illustration of The Proposed Dynamic Coded Caching Design}
For clear illustration of our proposed scheme, in this subsection, we give the following example to illustrate the Algorithm  \ref{alg-NSP} and Algorithm \ref{alg-NSD}.
\begin{example}\rm
\label{ex-neq}
To compare with the baseline scheme in Example \ref{ex-trivial}, we assume the same system setup here. According to Algorithm \ref{alg-NSP} and Algorithm \ref{alg-NSD}, the placement and delivery phase are explicated as follows. It needs to be highlighted that, the placement for $\mathcal{K}_1$ and $\mathcal{K}_2$ are assumed to be performed in different round, just like we have addressed before.
\begin{itemize}
\item {\bf Placement phase}: The placement phase can be completed by the following two steps.
\begin{itemize}
\item Step 1 (Placement for $\mathcal{K}_1$): At any previous round,  the placement phase in Algorithm \ref{alg-MN} is applied to users in $\mathcal{K}_1$,which is the same as \eqref{ex-k1-mn-p} in Example \ref{ex-trivial}.
\item Step 2 (Placement for $\mathcal{K}_2$): By Lines $10-12$ in Algorithm \ref{alg-NSP}, for each packet in \eqref{exm-K=2-2}, we further divide it into $t_1t_2{K_2 \choose t_2}=3\cdot 2 \cdot{3\choose 2}=18$ sub-packets with equal size, \emph{i.e.,} for each $n=0,1,\ldots,6$,
\begin{equation*}
\begin{split}
W_{n,\{0,1,2\} }&=\left\{W_{n,\{0,1,2\},\mathcal{B}}^{(l_2,l_1)}\ |\ \mathcal{B}\subset \mathcal{K}_2, |\mathcal{B}|=2, \ l_1\in [0,3),\ l_2\in[0,2)\right\},\\
W_{n,\{0,1,3\} }&=\left\{W_{n,\{0,1,3\},\mathcal{B}}^{(l_2,l_1)}\ |\ \mathcal{B}\subset \mathcal{K}_2, |\mathcal{B}|=2, \ l_1\in [0,3),\ l_2\in[0,2)\right\},\\
W_{n,\{0,2,3\} }&=\left\{W_{n,\{0,2,3\},\mathcal{B}}^{(l_2,l_1)}\ |\ \mathcal{B}\subset \mathcal{K}_2, |\mathcal{B}|=2, \ l_1\in [0,3),\ l_2\in[0,2)\right\},\\
W_{n,\{1,2,3\} }&=\left\{W_{n,\{1,2,3\},\mathcal{B}}^{(l_2,l_1)}\ |\ \mathcal{B}\subset \mathcal{K}_2, |\mathcal{B}|=2, \ l_1\in [0,3),\ l_2\in[0,2)\right\}.
\end{split}
\end{equation*}
By Lines $14$ in Algorithm \ref{alg-NSP}, each users $k_2\in \mathcal{K}_2$ caches
$$\mathcal{Z}_{k_2}=\left\{W^{(l_2,l_1)}_{n,\mathcal{A},\mathcal{B}}\ |\ k_2\in \mathcal{B}, \mathcal{A}\subset{\mathcal{K}_1},\mathcal{B}\subset{\mathcal{K}_2}, |\mathcal{A}|=3, |\mathcal{B}|=2, l_1\in [0,3), l_2\in[0,2)\right\}.$$
After this step, each user $k_2$ caches a total of $t_1t_2{K_2-1 \choose t_2-1}=3\cdot2\cdot{2\choose 1}=12$ sub-packets from each packet. Since each file is firstly divided into $\binom{4}{3}$ packets, each
  user $k_2$ caches $12\cdot {4\choose 3}\cdot N=12\cdot 4\cdot N=48N$ sub-packets. Since the total number of sub-packets in each file is $18\cdot{4\choose 3}=72$, the user $k_2$ caches $\frac{48N}{72}=\frac{2}{3}N=\frac{14}{3}$ files, which satisfies its cache size limitation.
\end{itemize}
\item {\bf Delivery phase}: Also assume that $\mathbf{d}=\{0,1,2,3,4,5,6\}$, and from the caching result in Algorithm \ref{alg-NSP},
we have  $$\mathcal{S}_1=\{0,1,2,3\},\ \ \mathcal{A}_0=\{1,2,3\},\ \ \mathcal{A}_1=\{0,2,3\},\ \ \mathcal{A}_2=\{0,1,3\},\ \ \mathcal{A}_3=\{0,1,2\}$$  $$\mathcal{S}_2=\{4,5,6\},\ \ \mathcal{B}_0=\{5,6\},\ \ \mathcal{B}_1=\{4,6\},\ \ \mathcal{B}_2=\{4,5\}.$$
   By the first procedure in Algorithm \ref{alg-NSD}, we have the following vertices, where the elements in the set of $\mathcal{X}$ and $\mathcal{Y}$ are labeled as $u_i, (i=0,\ldots,23)$, and $v_j, (j=0,\ldots,35)$ in short respectively.
\begin{small}
\begin{eqnarray*}
\begin{array}{cccccc}
  u_0=X^{(0,0)}_{\mathcal{S}_1,\mathcal{B}_0}
& u_1=X^{(0,1)}_{\mathcal{S}_1,\mathcal{B}_0}
& u_2=X^{(0,2)}_{\mathcal{S}_1,\mathcal{B}_0}
& u_3=X^{(0,3)}_{\mathcal{S}_1,\mathcal{B}_0}
& u_4=X^{(0,0)}_{\mathcal{S}_1,\mathcal{B}_1}
& u_5=X^{(0,1)}_{\mathcal{S}_1,\mathcal{B}_1}\\[0.2cm]
  u_6=X^{(0,2)}_{\mathcal{S}_1,\mathcal{B}_1}
& u_7=X^{(0,3)}_{\mathcal{S}_1,\mathcal{B}_1}
& u_8=X^{(0,0)}_{\mathcal{S}_1,\mathcal{B}_2}
& u_9=X^{(0,1)}_{\mathcal{S}_1,\mathcal{B}_2}
& u_{10}=X^{(0,2)}_{\mathcal{S}_1,\mathcal{B}_2}
& u_{11}=X^{(0,3)}_{\mathcal{S}_1,\mathcal{B}_2}\\[0.2cm]
  u_{12}=X^{(1,0)}_{\mathcal{S}_1,\mathcal{B}_0}
& u_{13}=X^{(1,1)}_{\mathcal{S}_1,\mathcal{B}_0}
& u_{14}=X^{(1,2)}_{\mathcal{S}_1,\mathcal{B}_0}
& u_{15}=X^{(1,3)}_{\mathcal{S}_1,\mathcal{B}_0}
& u_{16}=X^{(1,0)}_{\mathcal{S}_1,\mathcal{B}_1}
& u_{17}=X^{(1,1)}_{\mathcal{S}_1,\mathcal{B}_1}\\[0.2cm]
  u_{18}=X^{(1,2)}_{\mathcal{S}_1,\mathcal{B}_1}
& u_{19}=X^{(1,3)}_{\mathcal{S}_1,\mathcal{B}_1}
& u_{20}=X^{(1,0)}_{\mathcal{S}_1,\mathcal{B}_2}
& u_{21}=X^{(1,1)}_{\mathcal{S}_1,\mathcal{B}_2}
& u_{22}=X^{(1,2)}_{\mathcal{S}_1,\mathcal{B}_2}
& u_{23}=X^{(1,3)}_{\mathcal{S}_1,\mathcal{B}_2}\\[0.2cm]
%\end{array}
%\end{eqnarray*}
%\end{small}
%\begin{eqnarray*}
%\begin{array}{cccccc}
  v_0=Y^{(0,0)}_{\mathcal{A}_0,\mathcal{S}_2} &v_1=Y^{(1,0)}_{\mathcal{A}_0,\mathcal{S}_2}
& v_2=Y^{(2,0)}_{\mathcal{A}_0,\mathcal{S}_2} &v_3=Y^{(0,0)}_{\mathcal{A}_1,\mathcal{S}_2}&
  v_4=Y^{(1,0)}_{\mathcal{A}_1,\mathcal{S}_2} &v_5=Y^{(2,0)}_{\mathcal{A}_1,\mathcal{S}_2}\\[0.2cm]
  v_6=Y^{(0,0)}_{\mathcal{A}_2,\mathcal{S}_2} &v_7=Y^{(1,0)}_{\mathcal{A}_2,\mathcal{S}_2}&
  v_8=Y^{(2,0)}_{\mathcal{A}_2,\mathcal{S}_2} &v_9=Y^{(0,0)}_{\mathcal{A}_3,\mathcal{S}_2}
& v_{10}=Y^{(1,0)}_{\mathcal{A}_3,\mathcal{S}_2}&v_{11}=Y^{(2,0)}_{\mathcal{A}_3,\mathcal{S}_2}\\[0.2cm]
  v_{12}=Y^{(0,1)}_{\mathcal{A}_0,\mathcal{S}_2}&v_{13}=Y^{(1,1)}_{\mathcal{A}_0,\mathcal{S}_2}
& v_{14}=Y^{(2,1)}_{\mathcal{A}_0,\mathcal{S}_2}&v_{15}=Y^{(0,1)}_{\mathcal{A}_1,\mathcal{S}_2}&
  v_{16}=Y^{(1,1)}_{\mathcal{A}_1,\mathcal{S}_2}&v_{17}=Y^{(2,1)}_{\mathcal{A}_1,\mathcal{S}_2}\\[0.2cm]
  v_{18}=Y^{(0,1)}_{\mathcal{A}_2,\mathcal{S}_2}&v_{19}=Y^{(1,1)}_{\mathcal{A}_2,\mathcal{S}_2}&
  v_{20}=Y^{(2,1)}_{\mathcal{A}_2,\mathcal{S}_2}&v_{21}=Y^{(0,1)}_{\mathcal{A}_3,\mathcal{S}_2}
& v_{22}=Y^{(1,1)}_{\mathcal{A}_3,\mathcal{S}_2}&v_{23}=Y^{(2,1)}_{\mathcal{A}_3,\mathcal{S}_2}\\[0.2cm]
  v_{24}=Y^{(0,2)}_{\mathcal{A}_0,\mathcal{S}_2}&v_{25}=Y^{(1,2)}_{\mathcal{A}_0,\mathcal{S}_2}
& v_{26}=Y^{(2,2)}_{\mathcal{A}_0,\mathcal{S}_2}&v_{27}=Y^{(0,2)}_{\mathcal{A}_1,\mathcal{S}_2}&
  v_{28}=Y^{(1,2)}_{\mathcal{A}_1,\mathcal{S}_2}&v_{29}=Y^{(2,2)}_{\mathcal{A}_1,\mathcal{S}_2}\\[0.2cm]
  v_{30}=Y^{(0,2)}_{\mathcal{A}_2,\mathcal{S}_2}&v_{31}=Y^{(1,2)}_{\mathcal{A}_2,\mathcal{S}_2}&
  v_{32}=Y^{(2,2)}_{\mathcal{A}_2,\mathcal{S}_2}&v_{33}=Y^{(0,2)}_{\mathcal{A}_3,\mathcal{S}_2}&
  v_{34}=Y^{(1,2)}_{\mathcal{A}_3,\mathcal{S}_2}&v_{35}=Y^{(2,2)}_{\mathcal{A}_3,\mathcal{S}_2}
\end{array}
\end{eqnarray*}
\end{small}

By the second prodecure in Algorithm \ref{alg-NSD}, the following bipartite graph $\mathbf{G}=(\mathcal{X},\mathcal{Y}; \mathcal{E})$ depicted in Figure \ref{Fig-exm-ebt} can be obtained.
\begin{figure}[h]
\centering\includegraphics[width=0.9\textwidth]{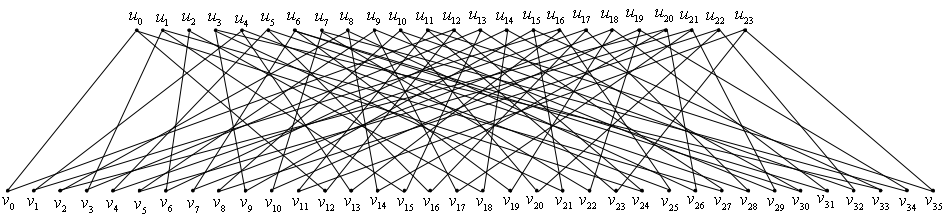}
\caption{The bipartite graph $\mathbf{G}=(\mathcal{X},\mathcal{Y}; \mathcal{E})$ in Example \ref{ex-neq}}\label{Fig-exm-ebt}
\end{figure}

Then, with this graph $\mathbf{G}$, we can find a saturating matching for $\mathcal{X}$ as  in Line \ref{line:match}  described in the last precedure of Algorithm \ref{alg-NSD}, which is depicted in Figure \ref{Fig-exm-bhpp}.
\begin{figure}[h]
\centering\includegraphics[width=0.9\textwidth]{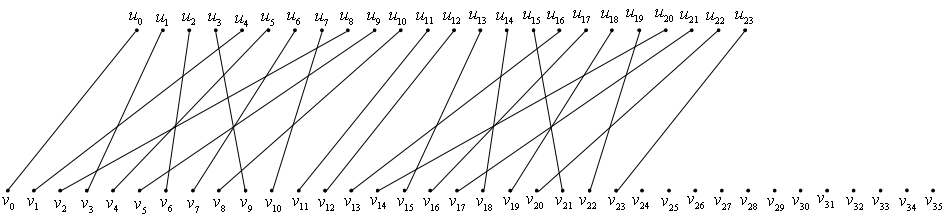}
\caption{A saturating matching for $\mathcal{X}$ in Example \ref{ex-neq}}\label{Fig-exm-bhpp}
\end{figure}
From the saturating matching and the Line \ref{line:send} in Algorithm \ref{alg-NSD}, the server sends the following $24$ coded messages involving sub-packets from both $\mathcal{X}$ and $\mathcal{Y}$.
\begin{small}
\begin{eqnarray*}
\begin{array}{ccc}
u_0\bigoplus v_0=X^{(0,0)}_{\mathcal{S}_1,\mathcal{B}_0}\bigoplus Y^{(0,0)}_{\mathcal{A}_0,\mathcal{S}_2}
& u_1\bigoplus v_3=X^{(0,1)}_{\mathcal{S}_1,\mathcal{B}_0}\bigoplus Y^{(0,0)}_{\mathcal{A}_1,\mathcal{S}_2}
& u_2\bigoplus v_6=X^{(0,2)}_{\mathcal{S}_1,\mathcal{B}_0}\bigoplus Y^{(0,0)}_{\mathcal{A}_2,\mathcal{S}_2}\\[0.2cm]
  u_3\bigoplus v_9=X^{(0,3)}_{\mathcal{S}_1,\mathcal{B}_0}\bigoplus Y^{(0,0)}_{\mathcal{A}_3,\mathcal{S}_2}
& u_4\bigoplus v_1=X^{(0,0)}_{\mathcal{S}_1,\mathcal{B}_1}\bigoplus Y^{(1,0)}_{\mathcal{A}_0,\mathcal{S}_2}
& u_5\bigoplus v_4=X^{(0,1)}_{\mathcal{S}_1,\mathcal{B}_1}\bigoplus Y^{(1,0)}_{\mathcal{A}_1,\mathcal{S}_2}\\[0.2cm]
  u_6\bigoplus v_7=X^{(0,2)}_{\mathcal{S}_1,\mathcal{B}_1}\bigoplus Y^{(1,0)}_{\mathcal{A}_2,\mathcal{S}_2}
& u_7\bigoplus v_{10}=X^{(0,3)}_{\mathcal{S}_1,\mathcal{B}_1}\bigoplus Y^{(1,0)}_{\mathcal{A}_3,\mathcal{S}_2}
& u_8\bigoplus v_{2}=X^{(0,0)}_{\mathcal{S}_1,\mathcal{B}_2}\bigoplus Y^{(2,0)}_{\mathcal{A}_0,\mathcal{S}_2}\\[0.2cm]
  u_9\bigoplus v_5=X^{(0,1)}_{\mathcal{S}_1,\mathcal{B}_2}\bigoplus Y^{(2,0)}_{\mathcal{A}_1,\mathcal{S}_2}
& u_{10}\bigoplus v_{8}= X^{(0,2)}_{\mathcal{S}_1,\mathcal{B}_2}\bigoplus Y^{(2,0)}_{\mathcal{A}_2,\mathcal{S}_2}
& u_{11}\bigoplus v_{11}=X^{(0,3)}_{\mathcal{S}_1,\mathcal{B}_2}\bigoplus Y^{(2,0)}_{\mathcal{A}_3,\mathcal{S}_2}\\[0.2cm]
  u_{12}\bigoplus v_{12}=X^{(1,0)}_{\mathcal{S}_1,\mathcal{B}_0}\bigoplus Y^{(0,1)}_{\mathcal{A}_0,\mathcal{S}_2}
& u_{13}\bigoplus v_{15}=X^{(1,1)}_{\mathcal{S}_1,\mathcal{B}_0}\bigoplus Y^{(0,1)}_{\mathcal{A}_1,\mathcal{S}_2}
& u_{14}\bigoplus v_{18}=X^{(1,2)}_{\mathcal{S}_1,\mathcal{B}_0}\bigoplus Y^{(0,1)}_{\mathcal{A}_2,\mathcal{S}_2}\\[0.2cm]
  u_{15}\bigoplus v_{21}=X^{(1,3)}_{\mathcal{S}_1,\mathcal{B}_0}\bigoplus Y^{(0,1)}_{\mathcal{A}_3,\mathcal{S}_2}
& u_{16}\bigoplus v_{13}=X^{(1,0)}_{\mathcal{S}_1,\mathcal{B}_1}\bigoplus Y^{(1,1)}_{\mathcal{A}_0,\mathcal{S}_2}
& u_{17}\bigoplus v_{16}=X^{(1,1)}_{\mathcal{S}_1,\mathcal{B}_1}\bigoplus Y^{(1,1)}_{\mathcal{A}_1,\mathcal{S}_2}\\[0.2cm]
  u_{18}\bigoplus v_{19}=X^{(1,2)}_{\mathcal{S}_1,\mathcal{B}_1}\bigoplus Y^{(1,1)}_{\mathcal{A}_2,\mathcal{S}_2}
& u_{19}\bigoplus v_{22}=X^{(1,3)}_{\mathcal{S}_1,\mathcal{B}_1}\bigoplus Y^{(1,1)}_{\mathcal{A}_3,\mathcal{S}_2}
& u_{20}\bigoplus v_{14}=X^{(1,0)}_{\mathcal{S}_1,\mathcal{B}_2}\bigoplus Y^{(2,1)}_{\mathcal{A}_0,\mathcal{S}_2}\\[0.2cm]
  u_{21}\bigoplus v_{17}=X^{(1,1)}_{\mathcal{S}_1,\mathcal{B}_2}\bigoplus Y^{(2,1)}_{\mathcal{A}_1,\mathcal{S}_2}
& u_{22}\bigoplus v_{20}=X^{(1,2)}_{\mathcal{S}_1,\mathcal{B}_2}\bigoplus Y^{(2,1)}_{\mathcal{A}_2,\mathcal{S}_2}
& u_{23}\bigoplus v_{23}=X^{(1,3)}_{\mathcal{S}_1,\mathcal{B}_2}\bigoplus Y^{(2,1)}_{\mathcal{A}_3,\mathcal{S}_2}
\end{array}
\end{eqnarray*}
\end{small}Finally for the remaining vertices in $\mathcal{Y}$ of $\mathbf{G}=(\mathcal{X},\mathcal{Y};\mathcal{E})$, the server sends the following $8$ coded messages,
\begin{small}
\begin{eqnarray*}
\begin{array}{cccccc}
v_{24}\bigoplus v_{25}=Y^{(0,2)}_{\mathcal{A}_0,\mathcal{S}_2}\bigoplus Y^{(1,2)}_{\mathcal{A}_0,\mathcal{S}_2}&
v_{25}\bigoplus v_{26}=Y^{(1,2)}_{\mathcal{A}_0,\mathcal{S}_2}\bigoplus Y^{(2,2)}_{\mathcal{A}_0,\mathcal{S}_2}\\[0.2cm]
v_{27}\bigoplus v_{28}=Y^{(0,2)}_{\mathcal{A}_1,\mathcal{S}_2}\bigoplus Y^{(1,2)}_{\mathcal{A}_1,\mathcal{S}_2}&
v_{28}\bigoplus v_{29}=Y^{(1,2)}_{\mathcal{A}_1,\mathcal{S}_2}\bigoplus Y^{(2,2)}_{\mathcal{A}_1,\mathcal{S}_2}\\[0.2cm]
v_{30}\bigoplus v_{31}=Y^{(0,2)}_{\mathcal{A}_2,\mathcal{S}_2}\bigoplus Y^{(1,2)}_{\mathcal{A}_2,\mathcal{S}_2}&
v_{31}\bigoplus v_{32}=Y^{(1,2)}_{\mathcal{A}_2,\mathcal{S}_2}\bigoplus Y^{(2,2)}_{\mathcal{A}_2,\mathcal{S}_2}\\[0.2cm]
v_{33}\bigoplus v_{34}=Y^{(0,2)}_{\mathcal{A}_3,\mathcal{S}_2}\bigoplus Y^{(1,2)}_{\mathcal{A}_3,\mathcal{S}_2}&
v_{34}\bigoplus v_{35}=Y^{(1,2)}_{\mathcal{A}_3,\mathcal{S}_2}\bigoplus Y^{(2,2)}_{\mathcal{A}_3,\mathcal{S}_2}
\end{array}
\end{eqnarray*}
\end{small}
\end{itemize}
After receiving all the coded messages, and together with its cache content $\mathcal{Z}_k$, each user is able to reconstruct requested file. Let us take the user $0$ as an example. By $\mathcal{Z}_{0}$, user $0$ only requires the following $18$ sub-packets
$$W^{(l_2,l_1)}_{0,\mathcal{A}_0,\mathcal{B}_0}, \ \ W^{(l_2,l_1)}_{0,\mathcal{A}_0,\mathcal{B}_1},\ \ W^{(l_2,l_1)}_{0,\mathcal{A}_0,\mathcal{B}_2},\ \ \ \ \ \  l_1\in [0,3),\ \ l_2\in[0,2).$$
We can check that the above sub-packets can be decoded from the following $18$ coded messages.
\begin{small}
\begin{eqnarray*}
\begin{array}{cccccc}
u_1\bigoplus v_3 & u_2\bigoplus v_6 & u_3\bigoplus v_9 & u_5\bigoplus v_4 & u_6\bigoplus v_7 & u_7\bigoplus v_{10}\\[0.2cm]
u_9\bigoplus v_5 & u_{10}\bigoplus v_8 & u_{11}\bigoplus v_{11} & u_{13}\bigoplus v_{15} & u_{14}\bigoplus v_{18} & u_{15}\bigoplus v_{21}\\[0.2cm]
u_{17}\bigoplus v_{16} & u_{18}\bigoplus v_{19} & u_{19}\bigoplus v_{22} & u_{21}\bigoplus v_{17} & u_{22}\bigoplus v_{20} & u_{23}\bigoplus v_{23}
\end{array}
\end{eqnarray*}
\end{small}
Furthermore, for each of the above coded message, user $0$ can decode a unique required sub-packet. For instance by
\begin{eqnarray*}\begin{small}
\begin{array}{c}
u_1\bigoplus v_3=X^{(0,1)}_{\mathcal{S}_1,\mathcal{B}_0}\bigoplus Y^{(0,0)}_{\mathcal{A}_1,\mathcal{S}_2}
=\left(W^{(0,0)}_{0,\mathcal{A}_0,\mathcal{B}_0}\bigoplus W^{(0,1)}_{2,\mathcal{A}_2,\mathcal{B}_0}\bigoplus W^{(0,1)}_{3,\mathcal{A}_3,\mathcal{B}_0}\right)\bigoplus \left(W^{(0,0)}_{5,\mathcal{A}_1,\mathcal{B}_2}\bigoplus W^{(0,0)}_{6,\mathcal{A}_1,\mathcal{B}_3}\right)
\end{array}\end{small}
\end{eqnarray*}
clearly user $0$ can decode $W^{(0,0)}_{0,\mathcal{A}_0,\mathcal{B}_0}$, since it has cached all the other sub-packets in $u_1\bigoplus v_3$. Thus finally, user $0$ can decode all the required sub-packets.

Similarly, let us take the  user $4$ as another example. After placement, user $4$ only needs the following $24$ sub-packets
$$W^{(l_2,l_1)}_{4,\mathcal{A}_0,\mathcal{B}_0}, \ \ W^{(l_2,l_1)}_{4,\mathcal{A}_1,\mathcal{B}_0},\ \ W^{(l_2,l_1)}_{4,\mathcal{A}_2,\mathcal{B}_0},\ \ W^{(l_2,l_1)}_{4,\mathcal{A}_3,\mathcal{B}_0}\ \ \ \ \ \  l_1\in [0,3),\ \ l_2\in[0,2),$$
which can be found in the following $24$ coded messages respectively.
\begin{small}
\begin{eqnarray}
\label{exm-2-k2-1}
\begin{array}{cccccccc}
u_4\bigoplus v_1 & u_5\bigoplus v_4 & u_6\bigoplus v_7 & u_7\bigoplus v_{10} & u_8\bigoplus v_2 & u_9\bigoplus v_5 &
u_{10}\bigoplus v_8 & u_{11}\bigoplus v_{11} \\[0.2cm]
u_{16}\bigoplus v_{13} & u_{17}\bigoplus v_{16} & u_{18}\bigoplus v_{19} & u_{19}\bigoplus v_{22}
&u_{20}\bigoplus v_{14} & u_{21}\bigoplus v_{17} & u_{22}\bigoplus v_{20} & u_{23}\bigoplus v_{23}
\end{array}
\end{eqnarray}
\end{small}
\begin{small}
\begin{eqnarray}
\label{exm-2-k2-2}
\begin{array}{cccccccc}
v_{24}\bigoplus v_{25} & v_{25}\bigoplus v_{26} & v_{27}\bigoplus v_{28} & v_{28}\bigoplus v_{29} & v_{30}\bigoplus v_{31} & v_{31}\bigoplus v_{32} & v_{33}\bigoplus v_{34} & v_{34}\bigoplus v_{35}
\end{array}
\end{eqnarray}
\end{small} On the one hand, user $4$ can obtain the following sub-packets by \eqref{exm-2-k2-1}.
$$W^{(l_2,l_1)}_{4,\mathcal{A}_0,\mathcal{B}_0}, \ \ W^{(l_2,l_1)}_{4,\mathcal{A}_1,\mathcal{B}_0},\ \ W^{(l_2,l_1)}_{4,\mathcal{A}_2,\mathcal{B}_0},\ \ W^{(l_2,l_1)}_{4,\mathcal{A}_3,\mathcal{B}_0}\ \ \ \ \ \  l_1\in [0,2),\ \ l_2\in[0,2).$$
and all the remaining
$$W^{(l_2,2)}_{4,\mathcal{A}_0,\mathcal{B}_0},\  W^{(l_2,2)}_{4,\mathcal{A}_1,\mathcal{B}_0},\  W^{(l_2,2)}_{4,\mathcal{A}_2,\mathcal{B}_0},\  W^{(l_2,2)}_{4,\mathcal{A}_3,\mathcal{B}_0}\ \ \ l_2\in[0,2)$$
can be obtained by \eqref{exm-2-k2-2} on the other hand. For instance, by
\begin{small}
\begin{eqnarray}
\label{eq-ex-neq}
\begin{split}
v_{24}\bigoplus v_{25}=Y^{(0,2)}_{\mathcal{A}_0,\mathcal{S}_2}\bigoplus Y^{(1,2)}_{\mathcal{A}_0,\mathcal{S}_2}=\left(W^{(0,2)}_{5,\mathcal{A}_0,\mathcal{B}_1}\bigoplus W^{(0,2)}_{6,\mathcal{A}_0,\mathcal{B}_2}\right)\bigoplus\left(W^{(0,2)}_{4,\mathcal{A}_0,\mathcal{B}_0}\bigoplus W^{(1,2)}_{6,\mathcal{A}_0,\mathcal{B}_2}\right)\\
v_{25}\bigoplus v_{26}=Y^{(1,2)}_{\mathcal{A}_0,\mathcal{S}_2}\bigoplus Y^{(2,2)}_{\mathcal{A}_0,\mathcal{S}_2}=\left(W^{(0,2)}_{4,\mathcal{A}_0,\mathcal{B}_0}\bigoplus W^{(1,2)}_{6,\mathcal{A}_0,\mathcal{B}_2}\right)\bigoplus \left(W^{(1,2)}_{4,\mathcal{A}_0,\mathcal{B}_0}\bigoplus W^{(1,2)}_{5,\mathcal{A}_0,\mathcal{B}_1}\right)
\end{split}\end{eqnarray}
\end{small}
user $4$ can firstly decode $W^{(0,2)}_{4,\mathcal{A}_0,\mathcal{B}_0}$ from $v_{24}\bigoplus v_{25}$. Based on the decoded $W^{(0,2)}_{4,\mathcal{A}_0,\mathcal{B}_0}$ and $v_{25}\bigoplus v_{26}$, user $4$ can further decode $W^{(1,2)}_{4,\mathcal{A}_0,\mathcal{B}_0}$. With the similar strategy, other users can also decode their requested sub-packets.

\end{example}

In this example, by using the proposed scheme, the transmission rate is $\frac{24+8}{72}=\frac{4}{9}$, which is smaller than $\frac{7}{12}$ in Example \ref{ex-trivial}. And this implies that, by using our proposed CSM scheme, the coded messages are designed to better exploit the coded multicasting opportunities among all the users in $\mathcal{K}_1,\mathcal{K}_2$.

\section{Performance Analysis}\label{sec-first-performance}
In this section, we will show that the rate of our new scheme is order optimal. Firstly,  let us introduce the scheme in \cite{SWXH}. In order to simplify, we denote the cache size of fixed  user and  group user as $M_{\mathcal{K}_1}$, $M_{\mathcal{K}_2}$ in this section.
\begin{lemma}\rm(Order-optimal scheme in \cite{SWXH})
\label{le-Decentral}
For a caching problem with $N$ files and $K$ users with $(K<N)$, without loss of generality, assume that the caching sizes satisfy $M_{0}\leq M_{1}\leq \ldots \leq M_{K-1}$ and denote $\mathcal{M}=\{M_{0}, M_{1}, \ldots, M_{K-1}\}$. A scheme can be obtained with transmission rate
\begin{eqnarray}
\label{eq-decentral-R}
R_D(K;\mathcal{M};N)=\sum\limits_{i=0}^{K-1}\left[  \prod\limits_{j=0}^{i}\left(1-\frac{M_j}{N}\right)\right].
\end{eqnarray} Furthermore this scheme is order optimal since
\begin{eqnarray}
\label{eq-order-decentral}
\frac{R_D(K;\mathcal{M};N)}{R^*(K;\mathcal{M};N)}\leq 6.
\end{eqnarray}
\end{lemma}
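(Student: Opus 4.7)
The plan is to prove this lemma in two parts: first establish achievability of the rate $R_D(K;\mathcal{M};N)$ via an explicit construction, then show the multiplicative gap $R_D/R^*\le 6$ by combining several lower bounds on $R^*$.

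For achievability, I would adopt a decentralized random placement together with an ordered delivery. In the placement phase, each user $k\in\{0,1,\ldots,K-1\}$ independently selects a uniformly random subset of $\tfrac{M_k}{N}F$ bits from each file to cache; the placement is independent across users and files. Under this rule, for any subset $\mathcal{S}\subseteq\{0,\ldots,K-1\}$ and any bit, the probability that the bit is cached by every user in $\mathcal{S}$ and missing from every user not in $\mathcal{S}$ concentrates (as $F\to\infty$) around $\prod_{j\in\mathcal{S}}\tfrac{M_j}{N}\prod_{j\notin\mathcal{S}}\bigl(1-\tfrac{M_j}{N}\bigr)$. For the delivery, order the users by non-decreasing cache size (already the given order) and process them one at a time: at stage $i$, the missing bits are exactly those not cached by any of users $0,1,\ldots,i$, whose expected fraction per file is $\prod_{j=0}^{i}\bigl(1-\tfrac{M_j}{N}\bigr)$. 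For each stage, XOR-combine requested sub-packets across the users already processed so that each coded message simultaneously serves users $0,\ldots,i$, yielding a transmission cost of $F\prod_{j=0}^{i}\bigl(1-\tfrac{M_j}{N}\bigr)$ bits. Summing over $i$ gives the claimed rate.

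For the order-optimality bound, I would combine three standard lower bounds on $R^*(K;\mathcal{M};N)$. First, the cut-set/file-sharing bound: for any $s\in\{1,\ldots,K\}$, serving the first $s$ users (those with smallest caches) satisfies $R^*\ge \tfrac{1}{s}\bigl(\min\{s,N\}-\sum_{j=0}^{s-1}M_j/\lfloor N/s\rfloor\bigr)^+$. Second, the trivial single-user bound $R^*\ge 1-M_0/N$. Third, a sliding-window or subset-demand bound tailored to heterogeneous caches. The strategy is to split the sum defining $R_D$ into a \emph{head} where $1-M_j/N$ is bounded away from zero and a \emph{tail} where the product decays geometrically, and to bound each piece by a constant times one of the three lower bounds. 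The geometric-decay of the tail allows a bound by a constant multiple of the head, while the head is compared term-wise to the cut-set bound applied at the appropriate $s$.

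The main obstacle will be calibrating the constant 6. Achievability is routine modulo the standard concentration argument, and the existence of an $O(1)$ gap follows from any reasonable pairing of upper and lower bounds; however, pinning down the explicit factor 6 requires a careful case analysis of the cache-size profile $\mathcal{M}$, because no single lower bound is tight in all regimes (cut-set is weak when caches are small but uniform, while the single-user bound is weak when caches are large and skewed). The hard technical step is showing that the partitioned sum in $R_D$ is, in each regime, dominated by at most six times the tightest applicable lower bound, with no remaining parameter range in which a worse constant is needed.
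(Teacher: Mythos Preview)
The paper does not prove this lemma at all: it is quoted verbatim as a known result from \cite{SWXH} and used as a black box in Section~\ref{sec-first-performance} to establish order-optimality of the new scheme. There is therefore no ``paper's own proof'' to compare your proposal against.

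That said, your outline is broadly the right shape for how such a result is established in the cited reference: decentralized random placement with an ordered greedy delivery gives the product-form rate, and the constant-factor gap comes from a case analysis against cut-set lower bounds. Two caveats. First, your description of the delivery (``at stage $i$ \ldots XOR-combine requested sub-packets across the users already processed'') is too vague to yield exactly $\prod_{j=0}^{i}(1-M_j/N)$ per stage; the actual argument needs the greedy clique-cover over all subsets containing user $i$, and the telescoping that produces the stated sum is not automatic from what you wrote. Second, your plan for the factor~$6$ is only a sketch: the cited work obtains it by a specific choice of the cut-set parameter $s$ depending on the profile $\mathcal{M}$, not by loosely ``splitting into head and tail,'' and you would need to reproduce that optimization to hit the constant exactly. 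For the purposes of this paper, however, none of this matters --- the lemma is simply imported, and you should cite it rather than reprove it.
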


 Assume that $\mathcal{M}=\{M_{\mathcal{K}_1},\ldots,M_{\mathcal{K}_1},M_{\mathcal{K}_2},\ldots,M_{\mathcal{K}_2}\}$. Let $\lambda_1=\frac{M_{\mathcal{K}_1}}{N}$, $\lambda_2=\frac{M_{\mathcal{K}_2}}{N}$ and $K=K_1+K_2$. Let us show the order optimality and smaller rate compared with the scheme from \cite{SWXH} in the following separate cases.

\subsection{The case $\lambda_1\leq\lambda_2$}\label{subsec-M1<M2}
When $\lambda_1\leq\lambda_2$, the cache set can be denoted by $$\mathcal{M}=%\{M_{\mathcal{K}_1},M_{\mathcal{K}_2}\}=
\{\underbrace{M_{\mathcal{K}_1},\ldots,M_{\mathcal{K}_1}}_{K_1},\underbrace{M_{\mathcal{K}_2},\ldots,M_{\mathcal{K}_2}}_{K_2}\},$$
then \eqref{eq-decentral-R} can be written as
\begin{small}
\begin{eqnarray}
\begin{split}
\label{eq-decentral-R-K12-1}
R_D(K;\mathcal{M};N)
=&\frac{K_1-t_1}{t_1}\left[1-\left(1-\frac{t_1}{K_1}\right)^{K_1}\right]+\left(1-\frac{t_1}{K_1}\right)^{K_1}\frac{K_2-t_2}{t_2}\left[1-\left(1-\frac{t_2}{K_2}\right)^{K_2}\right]\\
=&\left(\frac{1}{\lambda_1}-1\right)\left[1-\left(1-\lambda_1\right)^{K_1}\right]+\left(1-\lambda_1\right)^{K_1}\left(\frac{1}{\lambda_2}-1\right)\left[1-\left(1-\lambda_2\right)^{K_2}\right].
\end{split}\end{eqnarray}
\end{small} That is, from Lemma \ref{le-Decentral} we have an order optimal scheme with rate in \eqref{eq-decentral-R-K12-1}, and by Theorem \ref{th-new-r} we have
\begin{small}
\begin{eqnarray}
\label{eq-th-new-Simplify-1}
\begin{split}
R(K;\mathcal{M};N)&=\frac{(K_1-t_1)(t_1+t_2+1)}{(t_1+1)t_2t_1}+\frac{(K_2-t_2)t_1}{t_2(t_1+1)}\\
&=\left(\frac{1}{\lambda_1}-1\right)\frac{K_1\lambda_1+K_2\lambda_2+1}{(K_1\lambda_1+1)K_2\lambda_2}+\left(\frac{1}{\lambda_2}-1\right)\frac{K_1\lambda_1}{(K_1\lambda_1+1)}\\
&=\left(\frac{1}{\lambda_1}-1\right)\left(\frac{1}{K_2\lambda_2}+\frac{1}{K_1\lambda_1+1}\right)+\left(\frac{1}{\lambda_2}-1\right)\left(1-\frac{1}{K_1\lambda_1+1}\right)
\end{split}
\end{eqnarray}
\end{small}
By \eqref{eq-decentral-R-K12-1} and \eqref{eq-th-new-Simplify-1}, we have
\begin{small}
\begin{eqnarray}
\begin{split}
\label{eq-minus-decentral-new-1}
&\frac{R(K;\mathcal{M};N)}{R_D(K;\mathcal{M};N)}\\
=&\frac{\left(\frac{1}{\lambda_1}-1\right)\left(1-\frac{1}{K_1\lambda_1+1}+\frac{1}{K_2\lambda_2}\right)+\left(\frac{1}{\lambda_2}-1\right)\frac{1}{K_1\lambda_1+1}}
{\left(\frac{1}{\lambda_1}-1\right)\left[1-(1-\lambda_1)^{K_1}\right]+(1-\lambda_1)^{K_1}\left(\frac{1}{\lambda_2}-1\right)\left[1-(1-\lambda_2)^{K_2}\right]}\\
<&\frac{\left(\frac{1}{\lambda_1}-1\right)\left(1-\frac{1}{K_1\lambda_1+1}+\frac{1}{K_2\lambda_2}+\frac{1}{K_1\lambda_1+1}\right)}
{\left(\frac{1}{\lambda_1}-1\right)\left[1-(1-\lambda_1)^{K_1}\right]}\\
<&\frac{\left(\frac{1}{\lambda_1}-1\right)\left(1+\frac{1}{K_2\lambda_2}\right)}{\left(\frac{1}{\lambda_1}-1\right)\left(1-\frac{1}{K_1\lambda_1}\right)}\\
<&\frac{1+\frac{1}{K_2\lambda_2}}{1-\frac{1}{K_1\lambda_1}}\\
<&2
\end{split}\end{eqnarray}\end{small}

Now we consider the parameter where the rate of our new scheme is smaller than the rate of the scheme from \cite{SWXH}, i.e., $\frac{R(K;\mathcal{M};N)}{R_D(K;\mathcal{M};N)}<1$. Similarly, according to \eqref{eq-decentral-R-K12-1} and \eqref{eq-th-new-Simplify-1}, we have
\begin{eqnarray*}
\begin{split}
&\frac{R(K;\mathcal{M};N)}{R_D(K;\mathcal{M};N)}\\[0.2cm]
=&\frac{\left(\frac{1}{\lambda_1}-1\right)\left(1-\frac{1}{K_1\lambda_1+1}+\frac{1}{K_2\lambda_2}\right)+
\left(\frac{1}{\lambda_2}-1\right)\frac{1}{K_1\lambda_1+1}}
{\left(\frac{1}{\lambda_1}-1\right)\left[1-\left(1-\lambda_1\right)^{K_1}\right]+
\left(1-\lambda_1\right)^{K_1}\left(\frac{1}{\lambda_2}-1\right)\left[1-\left(1-\lambda_2\right)^{K_2}\right]}\\[0.2cm]
=&\frac{\left(\frac{1}{\lambda_1}-1\right)\left[-\frac{1}{K_1\lambda_1+1}+\frac{1}{K_2\lambda_2}+\left(1-\lambda_1\right)^{K_1}\right]+
\left(\frac{1}{\lambda_2}-1\right)\left[\frac{1}{K_1\lambda_1+1}-\left(1-\lambda_1\right)^{K_1}+\left(1-\lambda_1\right)^{K_1}\left(1-\lambda_2\right)^{K_2}\right]}
{\left(\frac{1}{\lambda_1}-1\right)\left[1-\left(1-\lambda_1\right)^{K_1}\right]+
\left(1-\lambda_1\right)^{K_1}\left(\frac{1}{\lambda_2}-1\right)\left[1-\left(1-\lambda_2\right)^{K_2}\right]}+1\\[0.2cm]
=&\frac{\left(\frac{1}{\lambda_1}-1\right)x+\left(\frac{1}{\lambda_2}-1\right)y}{z}+1
\end{split}\end{eqnarray*}
where $x=-\frac{1}{K_1\lambda_1+1}+\frac{1}{K_2\lambda_2}+\left(1-\lambda_1\right)^{K_1}$, $y=\frac{1}{K_1\lambda_1+1}-\left(1-\lambda_1\right)^{K_1}+\left(1-\lambda_1\right)^{K_1}\left(1-\lambda_2\right)^{K_2}$ and $z=1\left/\right.\left(\frac{1}{\lambda_1}-1\right)\left[1-\left(1-\lambda_1\right)^{K_1}\right]+
\left(1-\lambda_1\right)^{K_1}\left(\frac{1}{\lambda_2}-1\right)\left[1-\left(1-\lambda_2\right)^{K_2}\right]$. It is not difficult to check that $y>0$ and $z>0$ always hold. By calculating we can see that $\frac{1}{K_2\lambda_2+1}>\left(1-\lambda_2\right)^{K_2}$ holds when $K_2$ is enough large. We claim that if $x<0$ and
\begin{eqnarray}
\label{eq-com-neq4}
K_1(1-\lambda_1)+K_2(1-\lambda_2)+\frac{1}{\lambda_1}+\frac{1}{\lambda_2}<2+\frac{K_1}{\lambda_1},\end{eqnarray}then
\begin{eqnarray}\label{eq-com-neq1}
\left(\frac{1}{\lambda_1}-1\right)x+\left(\frac{1}{\lambda_2}-1\right)y<0.
\end{eqnarray}
This implies that $\frac{R(K;\mathcal{M};N)}{R_D(K;\mathcal{M};N)}<1$. Since
\begin{eqnarray}
\label{eq-com-neq3}
\begin{split}
\frac{-x}{y}
=&\frac{\frac{1}{K_1\lambda_1-1}+\frac{1}{K_2\lambda_2}-\left(1-\lambda_1\right)^{K_1}}
{\frac{1}{K_1\lambda_1+1}-\left(1-\lambda_1\right)^{K_1}+\left(1-\lambda_1\right)^{K_1}\left(1-\lambda_2\right)^{K_2}}\\[0.2cm]
\geq&\frac{\frac{1}{K_1\lambda_1-1}+\frac{1}{K_2\lambda_2}}
{\frac{1}{K_1\lambda_1+1}+\left(1-\lambda_1\right)^{K_1}\left(1-\lambda_2\right)^{K_2}}\\[0.2cm]
\geq&\frac{\frac{1}{K_1\lambda_1-1}+\frac{1}{K_2\lambda_2}}
{\frac{1}{K_1\lambda_1+1}+\frac{1}{K_1\lambda_1+1}\frac{1}{K_2\lambda_2+1}}\\[0.2cm]
=&\frac{K_2\lambda_2-K_1\lambda_1-1}{K_2\lambda_2+1},
\end{split}
\end{eqnarray} and \eqref{eq-com-neq4} can be written as
$$\frac{\frac{1}{\lambda_2}-1}{\frac{1}{\lambda_1}-1}<\frac{K_2\lambda_2-K_1\lambda_1-1}{K_2\lambda_2+1},$$
we have \begin{eqnarray}
\label{eq-com-neq2}
\frac{\frac{1}{\lambda_2}-1}{\frac{1}{\lambda_1}-1}<\frac{-x}{y},
\end{eqnarray} i.e., \eqref{eq-com-neq1} always holds.
\subsection{The case $\lambda_1>\lambda_2$}\label{subsec-M1>M2}
When $\lambda_1>\lambda_2$, the coresponding cache set can also be denoted by $$\mathcal{M}'=
\{\underbrace{M_{\mathcal{K}_2},\ldots,M_{\mathcal{K}_2}}_{K_2},\underbrace{M_{\mathcal{K}_1},\ldots,M_{\mathcal{K}_1}}_{K_1}\},$$
Similar to \eqref{eq-decentral-R-K12-1}, \eqref{eq-decentral-R} can be rewritten as
\begin{small}
\begin{eqnarray*}
\begin{split}
R_D(K;\mathcal{M}';N)=&\left(\frac{1}{\lambda_2}-1\right)\left[1-\left(1-\lambda_2\right)^{K_2}\right]+\left(1-\lambda_2\right)^{K_2}\left(\frac{1}{\lambda_1}-1\right)\left[1-\left(1-\lambda_1\right)^{K_1}\right].
\end{split}\end{eqnarray*}
\end{small}
According to Theorem \ref{th-new-r} we have
\begin{small}
\begin{eqnarray*}
\begin{split}
R(K;\mathcal{M}';N)&=\frac{(K_1-t_1)t_2}{t_1(t_2+1)}+\frac{(K_2-t_2)(t_1+t_2+1)}{(t_2+1)t_2t_1}\\
&=\left(\frac{1}{\lambda_1}-1\right)\frac{K_2\lambda_2}{(K_2\lambda_2+1)}+\left(\frac{1}{\lambda_2}-1\right)\frac{K_1\lambda_1+K_2\lambda_2+1}{(K_2\lambda_2+1)K_1\lambda_1}\\
&=\left(\frac{1}{\lambda_1}-1\right)\left(1-\frac{1}{K_2\lambda_2+1}\right)+\left(\frac{1}{\lambda_2}-1\right)\left(\frac{1}{K_1\lambda_1}+\frac{1}{K_2\lambda_2+1}\right)
\end{split}
\end{eqnarray*}
\end{small}
Similar to \eqref{eq-minus-decentral-new-1}, we have
\begin{small}
\begin{eqnarray}
\label{eq-minus-decentral-new-2}
\begin{split}
\frac{R(K;\mathcal{M}';N)}{R_D(K;\mathcal{M}';N)}<2\\
\end{split}\end{eqnarray}\end{small} Then using the same analysis in the Subsection \ref{subsec-M1<M2}, we claim that $\frac{R(K;\mathcal{M}';N)}{R_D(K;\mathcal{M}';N)}<1$ always holds when $\lambda_1$ and $\lambda_2$ satisfy the following conditions.
\begin{eqnarray}
%\begin{split}
\label{eq-3}
K_1(1-\lambda_1)+K_2(1-\lambda_2)+\frac{1}{\lambda_1}+\frac{1}{\lambda_2}<2+\frac{K_2}{\lambda_2}
\end{eqnarray}
\subsection{Order-optimality and Smaller Rate}\label{sub-sec-Order-Smaller}
From the discussions of Subsections \ref{subsec-M1<M2} and \ref{subsec-M1>M2}, the statements holds.
\begin{remark}\rm
\label{rem:1}
\begin{eqnarray*}
\frac{R(K;\mathcal{M};N)}{R^*(K;\mathcal{M};N)}<\frac{2R_D(K;\mathcal{M};N)}{R^*(K;\mathcal{M};N)}< 12.
\end{eqnarray*}
Therefore, the scheme in Theorem \ref{th-new-r} is order optimal.
\end{remark}

\begin{remark}\label{rem:2}
If $\lambda_1$ and $\lambda_2$ satisfy the following conditions,
\begin{align}\label{eq:rem2}
K_1(1-\lambda_1)+K_2(1-\lambda_2)+\frac{1}{\lambda_1}+\frac{1}{\lambda_2}<\left\{
             \begin{array}{lr}
             2+\frac{K_1}{\lambda_1},  &\lambda_1 \leq\lambda_2\\
             2+\frac{K_2}{\lambda_2},  &\lambda_1 >\lambda_2
             \end{array}
\right.,
\end{align}
the scheme in Theorem \ref{th-new-r} has the smaller transmission rate than that of the scheme proposed in Lemma \ref{le-Decentral}, by \eqref{eq-com-neq4} and \eqref{eq-3}.
\end{remark}
In fact, there is a large region of $\lambda_1$ and $\lambda_2$ satisfying \eqref{eq:rem2}. For example, let $K_1=500$ and $K_2=100$. Assume that $1>\lambda_1>\lambda_2$. The range of $\lambda_1$ and $\lambda_2$ satisfying \eqref{eq:rem2} is shown in Figure \ref{fig-7}.
\begin{figure}[htbp]
  \centering
  \includegraphics[width=0.5\textwidth]{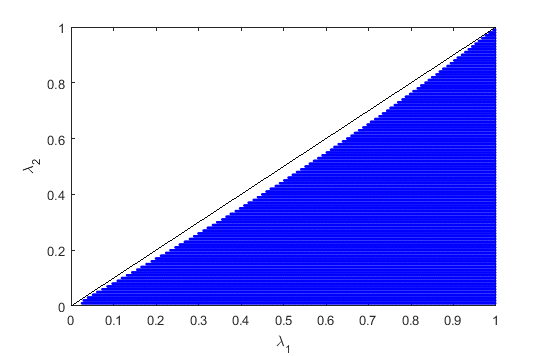}
  \caption{The range of $\lambda_1$ and $\lambda_2$ satisfying \eqref{eq:rem2}.}\label{fig-7}
\end{figure}
\section{Conclusion}\label{sec-conclusion}
In this paper, we focus on the coded caching design for the system containing a set of fixed and mobile users. In order to avoid resource consumption resulted from unnecessary cache adaptation and updating for those fixed users in the placement phase, and to minimize the amount of transmission in the delivery phase, we propose a $(K_1,K_2;M_1,M_2;N)$ dynamic coded caching design framework through concatenating-based placement and the saturating matching based delivery. With the proposed scheme, once the cache of fixed users is fulfilled, there is no need to consider the cache content adaptation when there is no change in the files at server. Instead, the placement phase only needs to be performed for those mobile users. On the basis, the saturating matching based multicasting can be performed to minimize the transmission load in the delivery phase. We have also shown that the proposed scheme is order-optimal.

\section*{Appendix: The proof of Theorem \ref{th-new-r}}
\label{sec-appendix}
In order to prove Theorem \ref{th-new-r}, the following results are useful. First let us consider each edge defined in Algorithm \ref{alg-NSD}.
\begin{proposition}
\label{le-proof-pair}
For any $\mathcal{S}_1=\{k_{1,0},k_{1,1},\ldots,k_{1,t_1}\}\in \mathfrak{S}_1$ and $\mathcal{S}_2=\{k_{2,0},k_{2,1},\ldots,k_{2,t_2}\}\in \mathfrak{S}_2$, by \eqref{eq-ex-step1-0} and \eqref{eq-ex-step1-1}, we have an edge $(X^{(l_2,m_1)}_{\mathcal{S}_1,\mathcal{B}},Y^{(m_2,l_1)}_{\mathcal{A},\mathcal{S}_2})$. From the Algorithm
\ref{alg-NSD}, the server transmits
\begin{eqnarray}
\begin{split}
\label{eq-proof-signal}
%\mathcal{R}_{\{\mathcal{S}_1,\mathcal{S}_2,\mathcal{A},\mathcal{B}\}}&=
X^{(l_2,m_1)}_{\mathcal{S}_1,\mathcal{B}}\bigoplus Y^{(m_2,l_1)}_{\mathcal{A},\mathcal{S}_2}
&=\left(\bigoplus\limits_{m'_1=0}^{m_1-1} W^{(l_2,m_1-1)}_{d_{k_{1,m_1'}},\mathcal{S}_1\setminus\{k_{1,m'_1}\},\mathcal{B}}\right)\bigoplus\left(
\bigoplus\limits_{m'_1=m_1+1}^{t_1} W^{(l_2,m_1)}_{d_{k_{1,m_1'}},\mathcal{S}_1\setminus\{k_{1,m'_1}\},\mathcal{B}}\right)\\
&\bigoplus\left(\bigoplus\limits_{m'_2=0}^{m_2-1} W^{(m_2-1,l_1)}_{d_{k_{2,m'_2}},\mathcal{A},\mathcal{S}_2\setminus\{k_{2,m'_2}\}}\right)\bigoplus\left(
\bigoplus\limits_{m'_2=m_2+1}^{t_2} W^{(m_2,l_1)}_{d_{k_{2,m'_2}},\mathcal{A},\mathcal{S}_2\setminus\{k_{2,m'_2}\}}\right)
\end{split}
\end{eqnarray}
where $\mathcal{A}=\mathcal{S}_{1}\setminus\{k_{1,m_1}\}$ and $\mathcal{B}=\mathcal{S}_{2}\setminus\{k_{2,m_2}\}$. Then in delivery phase, each user in $\mathcal{A}\bigcup\mathcal{B}$ can obtain its requested sub-packet from the received signal in \eqref{eq-proof-signal}.
\end{proposition}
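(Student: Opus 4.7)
The plan is to verify that the transmitted XOR in \eqref{eq-proof-signal} lets every user in $\mathcal{A}\cup\mathcal{B}$ decode exactly one desired sub-packet, by an ``interference cancellation, then MN-style decoding'' argument performed separately on the $X$-side and the $Y$-side. By the symmetry between the two groups it suffices to treat a user $k=k_{1,m_1'}\in\mathcal{A}$ (with $m_1'\neq m_1$); the argument for $k\in\mathcal{B}$ is obtained by swapping the roles of $(\mathcal{K}_1,\mathfrak{A},X)$ and $(\mathcal{K}_2,\mathfrak{B},Y)$.

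First I would isolate the $Y$-term. By Placement 1 of Algorithm \ref{alg-NSP}, a user $k_1\in\mathcal{K}_1$ caches every sub-packet whose second subscript $\mathcal{A}'$ contains $k_1$, for \emph{all} indices $\mathcal{B}'$, $l_1$, $l_2$ (Placement 2 only refines packets into sub-packets for the mobile users and does not disturb the fixed users' cached content). Since $k_{1,m_1'}\in\mathcal{S}_1\setminus\{k_{1,m_1}\}=\mathcal{A}$, every summand of $Y^{(m_2,l_1)}_{\mathcal{A},\mathcal{S}_2}$, which has second subscript $\mathcal{A}$, lies in $\mathcal{Z}_{k_{1,m_1'}}$. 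Hence user $k_{1,m_1'}$ can compute $Y^{(m_2,l_1)}_{\mathcal{A},\mathcal{S}_2}$ from its cache and XOR it out of the received signal to recover $X^{(l_2,m_1)}_{\mathcal{S}_1,\mathcal{B}}$.

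Next I would run the standard MN cancellation on $X^{(l_2,m_1)}_{\mathcal{S}_1,\mathcal{B}}$. The summation indexes $m_1''$ over $\{0,\ldots,t_1\}\setminus\{m_1\}$. For every $m_1''\neq m_1'$ we have $k_{1,m_1'}\in\mathcal{S}_1\setminus\{k_{1,m_1''\}}$, so the corresponding sub-packet $W^{(l_2,\cdot)}_{d_{k_{1,m_1''}},\mathcal{S}_1\setminus\{k_{1,m_1''}\},\mathcal{B}}$ is cached by $k_{1,m_1'}$ (again because Placement 1 gives the whole packet to any $k_1$ whose label is in the second subscript, and caching is independent of the superscript $(l_2,l_1)$). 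The unique uncached summand, corresponding to $m_1''=m_1'$, is $W^{(l_2,m_1-1)}_{d_{k_{1,m_1'}},\mathcal{A},\mathcal{B}}$ if $m_1'<m_1$, or $W^{(l_2,m_1)}_{d_{k_{1,m_1'}},\mathcal{A},\mathcal{B}}$ if $m_1'>m_1$, which is a requested sub-packet of $W_{d_{k_{1,m_1'}}}$. Canceling the cached terms from $X^{(l_2,m_1)}_{\mathcal{S}_1,\mathcal{B}}$ yields this sub-packet.

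The main point requiring care is purely bookkeeping: tracking the piecewise superscripts $(l_2,m_1-1)$ versus $(l_2,m_1)$ imposed by Algorithm \ref{alg-NSD} and checking that the uncached summand recovered is genuinely one the user demands. This is not a true obstacle because the caches are blind to the superscripts $(l_2,l_1)$, which only serve to match an $X$-vertex with a $Y$-vertex through the edge condition $\mathcal{S}_1\setminus\{k_{1,m_1}\}=\mathcal{A}$, $\mathcal{S}_2\setminus\{k_{2,m_2}\}=\mathcal{B}$ in Line~\ref{line:req}. With this observation the same two-step cancellation, applied with $\mathcal{K}_2$-users in place of $\mathcal{K}_1$-users and $X\leftrightarrow Y$, completes the case $k\in\mathcal{B}$, and the proposition follows.
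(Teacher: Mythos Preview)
Your approach is essentially the same as the paper's: fix a user in $\mathcal{A}$, observe that every other summand in \eqref{eq-proof-signal} lies in its cache (because the second subscript of each such sub-packet contains the user's index), cancel them, and invoke symmetry for $\mathcal{B}$. One notational slip: the unique uncached summand at $m_1''=m_1'$ has second subscript $\mathcal{S}_1\setminus\{k_{1,m_1'}\}$, not $\mathcal{A}=\mathcal{S}_1\setminus\{k_{1,m_1}\}$; as written, your ``uncached'' sub-packet would actually sit in $\mathcal{Z}_{k_{1,m_1'}}$, so correct that label and the argument goes through exactly as in the paper.
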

\begin{proof} Let us consider any user $k_{1,s'_1}\in \mathcal{A}$, $0\leq s'_1\leq t_1$. By \eqref{eq-proof-signal}, if $s'_1<m_1$, the sub-packet $W^{(l_2,m_1-1)}_{d_{k_{1,s'_1}},\mathcal{S}_1\setminus k_{1,s'_1},\mathcal{B}}$ is required by user $k_{1,s'_1}$. Otherwise $W^{(l_2,m_1)}_{d_{k_{1,s'_1}},\mathcal{S}_1\setminus k_{1,s'_1},\mathcal{B}}$ is required. From Line $6$ in Algorithm \ref{alg-NSP}, user $k_{1,s'_1}$ caches other sub-packets in  \eqref{eq-proof-signal}, \emph{i.e.}, $W^{(l_2,i)}_{d_{k_{1,m'_1}},\mathcal{S}_1\setminus \{k_{1,m'_1}\},\mathcal{B}}$ and $W^{(j,l_1)}_{d_{k_{2,m'_2}},\mathcal{A},\mathcal{S}_2\setminus \{k_{2,m'_2}\}}$ where $s'_1\neq m'_1$, $i=m_1-1$, $m_1$ and  $j=m_2-1$, $m_2$. So the required sub-packet can be decoded by user $k_{1,s'_1}$. Similarly each user from $\mathcal{B}$ can also decode its required sub-packet.
\end{proof}

\begin{lemma}
\label{le-hall-theorem}
(\cite{BM})Given a bipartite graph $\mathcal{G}=(\mathcal{X},\mathcal{Y};\mathcal{E})$, if there exist two positive integers $m$ and $n$ such that $d(\mathcal{X})=m$ and $d(\mathcal{Y})=n$, then there is a saturating matching for $\mathcal{X}$ if $m>n$, \emph{i.e.}, there exists a matching with $|\mathcal{X}|$ edges.
\end{lemma}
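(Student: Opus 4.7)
The plan is to reduce the statement to Hall's marriage theorem and verify Hall's condition by a straightforward double-counting argument. Recall Hall's theorem: a bipartite graph $\mathbf{G}=(\mathcal{X},\mathcal{Y};\mathcal{E})$ admits a matching saturating $\mathcal{X}$ if and only if, for every subset $S\subseteq\mathcal{X}$, the neighborhood satisfies $|N_{\mathbf{G}}(S)|\geq|S|$. So the entire task reduces to checking this inequality under the degree-regularity hypothesis $d(\mathcal{X})=m$, $d(\mathcal{Y})=n$ with $m>n$.

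First, I would fix an arbitrary $S\subseteq\mathcal{X}$ and count the number of edges with at least one endpoint in $S$. Counting from the $\mathcal{X}$-side, since every vertex of $\mathcal{X}$ (and in particular every vertex of $S$) has degree exactly $m$, the number of such edges equals $m|S|$. Counting from the $\mathcal{Y}$-side, every edge incident to $S$ must land in $N_{\mathbf{G}}(S)$, and each vertex in $N_{\mathbf{G}}(S)$ has degree exactly $n$ in $\mathbf{G}$, so the number of edges incident to $N_{\mathbf{G}}(S)$ is $n|N_{\mathbf{G}}(S)|$. The edges incident to $S$ form a subset of the edges incident to $N_{\mathbf{G}}(S)$, which yields
\begin{equation*}
m|S|\;\leq\;n\,|N_{\mathbf{G}}(S)|.
\end{equation*}

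Rearranging gives $|N_{\mathbf{G}}(S)|\geq (m/n)|S|$, and since $m>n$ we obtain $|N_{\mathbf{G}}(S)|\geq |S|$ (in fact strictly greater whenever $S$ is nonempty). Since $S$ was arbitrary, Hall's condition holds, and Hall's theorem furnishes a matching of size $|\mathcal{X}|$, which is precisely a saturating matching for $\mathcal{X}$.

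There is no real obstacle here; the only subtlety is making sure the double count is applied correctly (edges incident to $S$ versus edges incident to $N_{\mathbf{G}}(S)$), and noting that the hypothesis $m>n$ is slightly stronger than needed --- $m\geq n$ would already suffice, which is worth a brief remark but is not essential to the proof.
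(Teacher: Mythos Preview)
Your proof is correct and is the standard double-counting verification of Hall's condition for biregular bipartite graphs. Note, however, that the paper does not actually supply a proof of this lemma at all: it is quoted as a known result from Bondy--Murty \cite{BM}, so there is nothing in the paper to compare against. Your closing remark that $m\geq n$ already suffices is apt; in fact the paper's own application of the lemma (Proposition~\ref{pro-baohe-X}) only establishes an inequality rather than a strict one, so the weak form is what is really being used.
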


Without loss of generality, we assume that $M_{\mathcal{K}_1}\geq M_{\mathcal{K}_2}$. Then we can check that $\left\lceil\frac{t_2(K_1-t_1)}{K_2-t_2}\right\rceil<t_1$ always holds. From Line  $14$ in Algorithm \ref{alg-NSD}, we have a bipartite graph $\mathcal{G}=(\mathcal{X},\mathcal{Y};\mathcal{E})$. Now let us consider its induced subgraph $\mathcal{G}'=(\mathcal{X},\mathcal{Y}';\mathcal{E}')$ where
\begin{eqnarray}
\begin{split}
\label{eq-new-bipartiite-prove}
\mathcal{X}&=\left\{X^{(l_2,m_1)}_{\mathcal{S}_{1},\mathcal{B}}\ |\ \mathcal{S}_{1}\subset \mathcal{K}_1,\mathcal{B}\subset\mathcal{K}_2, |\mathcal{S}_1|=t_1+1, |\mathcal{B}|=t_2, l_2\in [0,t_2), m_1\in [0,t_1]\right\}\\
\mathcal{Y}'&=\left\{Y^{(m_2,l_1)}_{\mathcal{A},\mathcal{S}_2 }\ |\ \mathcal{S}_2\subset \mathcal{K}_2, \mathcal{A}\subset \mathcal{K}_1, |\mathcal{S}_2|=t_2+1, |\mathcal{A}|=t_1, l_1\in \left[0,\left\lceil\frac{t_2(K_1-t_1)}{K_2-t_2}\right\rceil\right), m_2\in [0,t_2]\right\}%\\
\end{split}
\end{eqnarray}
\begin{proposition}
\label{pro-baohe-X}
The bipartite graph $\mathcal{G}'=(\mathcal{X},\mathcal{Y}';\mathcal{E}')$ with parameters in \eqref{eq-new-bipartiite-prove} exists a saturating matching for $\mathcal{X}$.
\end{proposition}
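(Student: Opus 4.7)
My plan is to apply Lemma \ref{le-hall-theorem} to the bipartite graph $\mathcal{G}' = (\mathcal{X}, \mathcal{Y}'; \mathcal{E}')$, so it suffices to show that $\mathcal{G}'$ is biregular on both sides and that $d(\mathcal{X}) \geq d(\mathcal{Y}')$. The key structural observation is that the adjacency rule stated in Line \ref{line:req} of Algorithm \ref{alg-NSD} depends only on the set data $(\mathcal{S}_1, m_1, \mathcal{B})$ on the $\mathcal{X}$ side and $(\mathcal{A}, \mathcal{S}_2, m_2)$ on the $\mathcal{Y}'$ side, and is completely independent of the auxiliary labels $l_1$ and $l_2$. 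Consequently, the neighborhood count factors as (choices for the adjacent set data) $\times$ (choices for the free label), which makes the graph biregular.

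Next, I would count $d(\mathcal{X})$. Fix an arbitrary $X^{(l_2,m_1)}_{\mathcal{S}_1,\mathcal{B}} \in \mathcal{X}$. Any neighbor $Y^{(m_2,l_1)}_{\mathcal{A},\mathcal{S}_2}$ must have $\mathcal{A} = \mathcal{S}_1 \setminus \{k_{1,m_1}\}$ (uniquely determined), and must satisfy $\mathcal{S}_2 \setminus \{k_{2,m_2}\} = \mathcal{B}$, i.e., $\mathcal{S}_2 = \mathcal{B} \cup \{k\}$ for some $k \in \mathcal{K}_2 \setminus \mathcal{B}$, giving $K_2 - t_2$ choices and determining $m_2$ as the position of $k$ in the ordered $\mathcal{S}_2$. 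The label $l_1$ is free over $\lceil t_2(K_1-t_1)/(K_2-t_2) \rceil$ values, while $l_2$ is already fixed by the choice of $X$. Hence
\begin{equation*}
d(\mathcal{X}) = (K_2 - t_2)\,\Bigl\lceil \tfrac{t_2(K_1-t_1)}{K_2-t_2}\Bigr\rceil,
\end{equation*}
independent of the particular $(l_2, m_1, \mathcal{S}_1, \mathcal{B})$. A symmetric count on the other side gives $d(\mathcal{Y}') = (K_1 - t_1)\, t_2$, also independent of the chosen vertex.

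Finally, I would compare the two degrees. Using $\lceil x \rceil \geq x$,
\begin{equation*}
d(\mathcal{X}) \;\geq\; (K_2 - t_2)\cdot \tfrac{t_2(K_1-t_1)}{K_2-t_2} \;=\; t_2(K_1 - t_1) \;=\; d(\mathcal{Y}'),
\end{equation*}
so the biregularity hypothesis of Lemma \ref{le-hall-theorem} is met, delivering a saturating matching for $\mathcal{X}$. The only subtlety, and the place I expect to have to be slightly careful, is that Lemma \ref{le-hall-theorem} is quoted with strict inequality $m > n$, whereas the ceiling only guarantees $d(\mathcal{X}) \geq d(\mathcal{Y}')$ with possible equality when $(K_2-t_2) \mid t_2(K_1-t_1)$. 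In that boundary case the same conclusion still follows by the standard biregular form of Hall's theorem: for any $S \subseteq \mathcal{X}$, double-counting edges between $S$ and $N_{\mathcal{G}'}(S)$ yields $|N_{\mathcal{G}'}(S)|\,d(\mathcal{Y}') \geq |S|\,d(\mathcal{X}) \geq |S|\,d(\mathcal{Y}')$, hence $|N_{\mathcal{G}'}(S)| \geq |S|$ and Hall's condition holds. This completes the proposed argument.
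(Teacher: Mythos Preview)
Your proposal is correct and follows essentially the same approach as the paper's proof: compute the common degree on each side of the biregular graph $\mathcal{G}'$, compare them, and invoke Lemma~\ref{le-hall-theorem}. If anything, your treatment is more careful than the paper's, since you explicitly address the equality case $d(\mathcal{X})=d(\mathcal{Y}')$ via the standard Hall double-counting argument, whereas the paper's write-up glosses over this (and in fact contains a sign slip in its displayed inequality).
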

\begin{proof}%(~$G'$ is a saturating matching for $\mathcal{X}$~)
By Line $14$ in Algorithm \ref{alg-NSD}, each vertex, say $X^{(l_2,m_1)}_{\mathcal{S}_{1},\mathcal{B}}\in \mathcal{X}$, is adjacent to $Y^{(m_2,l_1)}_{\mathcal{S}_{1}\setminus\{k_{1,m_1}\},\mathcal{B}\bigcup \{k_{2,m_2}\} }$ for all $l_1\in \left[0,\left\lceil\frac{t_2(K_1-t_1)}{K_2-t_2}\right\rceil\right)$ and $k_{2,m_2}\in \mathcal{K}_2\setminus \mathcal{B}$. It is easy to check that $d\left(X^{(l_2,m_1)}_{\mathcal{S}_{1},\mathcal{B}}\right)={K_2-t_2 \choose 1}\left\lceil\frac{t_2(K_1-t_1)}{K_2-t_2}\right\rceil$. So we have $d\left(\mathcal{X}\right)={K_2-t_2 \choose 1}\left\lceil\frac{t_2(K_1-t_1)}{K_2-t_2}\right\rceil$. Furthermore each vertex, say $Y^{(m_2,l_1)}_{\mathcal{A},\mathcal{S}_2}$, is adjacent to $X^{(l_2,m_1)}_{\mathcal{A}\bigcup\{k_{1,m_1}\},\mathcal{S}_2\setminus \{k_{2,m_2}\} }$ for all $l_2\in [0,t_2)$ and $k_{1,m_1}\in \mathcal{K}_1\setminus \mathcal{A}$. Similarly we have $d\left(Y^{(m_2,l_1)}_{\mathcal{A},\mathcal{S}_2}\right)={K_1-t_1 \choose 1}t_2$. Then $d(\mathcal{Y}')={K_1-t_1 \choose 1}t_2$. Clearly $$d(\mathcal{X})={K_2-t_2 \choose 1}\left\lceil\frac{t_2(K_1-t_1)}{K_2-t_2}\right\rceil\leq{K_2-t_2 \choose 1}\frac{t_2(K_1-t_1)}{K_2-t_2}=t_1(K_1-t_1)=d(\mathcal{Y}').$$
From Lemma \ref{le-hall-theorem}, there exists a saturating matching for $\mathcal{X}$ in $\mathcal{G}'$.
\end{proof}
Now let us prove Theorem \ref{th-new-r}.
\begin{proof}  The placement phase can be completed by Algorithm \ref{alg-NSP}. For $\mathcal{K}_1$ users, by Lines $1-8$ in Algorithm \ref{alg-NSP}, each users $k_1\in \mathcal{K}_1$ caches a total of ${K_1-1 \choose t_1-1}$ packets for each file, and each file divided into ${K_1 \choose t_1}$ packets. Thus the user $k_1$ caches ${K_1-1 \choose t_1-1}N\left/\right.{K_1 \choose t_1}=\frac{t_1}{K_1}N=M_1$ files. By Lines $13-14$ in Algorithm \ref{alg-NSP}, each users $k_2\in \mathcal{K}_2$ caches a total of $t_1t_2{K_2-1 \choose t_2-1}$ sub-packets for each packet, and each file is divided into $t_1t_2{K_1 \choose t_1}{K_2 \choose t_2}$ sub-packets. Thus the user $k_2$ caches $t_1t_2{K_1 \choose t_1}{K_2-1 \choose t_2-1}N\left/\right.t_1t_2{K_1 \choose t_1}{K_2 \choose t_2}=\frac{t_2}{K_2}N=M_2$ files. Therefore, satisfies they cache size limitation. Then we only consider the delivery phase. Assume $\mathcal{G}''=\{\mathcal{X}, \mathcal{Y}'', \mathcal{E}''\}$ is a saturating matching of $G'=\{\mathcal{X}, \mathcal{Y}', \mathcal{E}'\}$. Using bipartite graphs $\mathcal{G}$ and $\mathcal{G}'$, the delivery phase consists of the following two steps.
\begin{itemize}
\item[Step1:] Let us consider $\mathcal{G}'$ first.
For each edge $(x,y)\in\mathcal{E}''$, $x\in\mathcal{X}$, $y\in \mathcal{Y}''$, the server transmits signal $x\bigoplus y$. Then the left vertices in $\mathcal{Y}'\setminus \mathcal{Y}''$ is transmitted by the server directly.
\item[Step2:] The case $l_1\in\left[0,\left\lceil\frac{t_2(K_1-t_1)}{K_2-t_2}\right\rceil\right)$ has been discussed in Step 1. It is sufficient to consider the case $l_1\in \left[\left\lceil\frac{t_2(K_1-t_1)}{K_2-t_2}\right\rceil,t_1\right)$, \emph{i.e.}, the left vertices in $\mathcal{Y}\setminus \mathcal{Y}'$. By Line $14$ in Algorithm \ref{alg-NSD}, the left vertex set
\begin{eqnarray*}
%\label{eq-new-bipartiite-left-vertex}
\mathcal{Y}'''&=\mathcal{Y}\setminus \mathcal{Y}'=\left\{Y^{(m_2,l_1)}_{\mathcal{A},\mathcal{S}_2 }\ |\ \mathcal{S}_2\subset \mathcal{K}_2, \mathcal{A}\subset \mathcal{K}_1, |\mathcal{S}_2|=t_2+1, |\mathcal{A}|=t_1, l_1\in \left[\left\lceil\frac{t_2(K_1-t_1)}{K_2-t_2}\right\rceil,t_1\right), m_2\in [0,t_2]\right\}.
\end{eqnarray*} We divide the vertex set $\mathcal{Y}'''$ into the following order subsets
$$\left(Y^{(0,l_1)}_{\mathcal{A},\mathcal{S}_2},Y^{(1,l_1)}_{\mathcal{A},\mathcal{S}_2},\ldots,
Y^{(t_2,l_1)}_{\mathcal{A},\mathcal{S}_2}\right),\ \ l_1\in\left[\left\lceil\frac{t_2(K_1-t_1)}{K_2-t_2}\right\rceil,t_1\right),\ \ \mathcal{A}\subset\mathcal{K}_1,\ \ |\mathcal{A}|=t_1,\ \ \mathcal{S}_2\subset\mathcal{K}_2,\ \ |\mathcal{S}_2|=t_2+1.$$
Then the server encodes each of the above order sub-set by the following matrix and sends to the users.
\begin{eqnarray*}
\mathbf{G}=\left(
    \begin{array}{ccccc}
      1 &      0      & \cdots & 1 & 1 \\
      1 &      1      & \cdots & 1 & 1 \\
      \vdots & \vdots & \ddots & \vdots & \vdots \\
      1 &      1      & \cdots & 1 & 0 \\
      1 &      1      & \cdots & 1 & 1 \\
      0 &      1      & \cdots & 1 & 1 \\
    \end{array}
  \right)_{(t_2+1)\times t_2}
\end{eqnarray*}For instance, \eqref{eq-ex-neq} by
\begin{eqnarray*}
\begin{split}
\left(Y^{(0,2)}_{\mathcal{A}_0,\mathcal{S}_2}, Y^{(1,2)}_{\mathcal{A}_0,\mathcal{S}_2}, Y^{(2,2)}_{\mathcal{A}_0,\mathcal{S}_2}\right)\cdot\left(
    \begin{array}{cc}
      1 &      0 \\
      1 &      1 \\
      0 &      1 \\
    \end{array}
  \right)
=&\left(Y^{(0,2)}_{\mathcal{A}_0,\mathcal{S}_2}\bigoplus Y^{(1,2)}_{\mathcal{A}_0,\mathcal{S}_2},Y^{(1,2)}_{\mathcal{A}_0,\mathcal{S}_2}\bigoplus Y^{(2,2)}_{\mathcal{A}_0,\mathcal{S}_2} \right)\\
=&\left[\left(W^{(0,2)}_{5,\mathcal{A}_0,\mathcal{B}_1}\bigoplus W^{(0,2)}_{6,\mathcal{A}_0,\mathcal{B}_2}\right)\bigoplus\left(W^{(0,2)}_{4,\mathcal{A}_0,\mathcal{B}_0}\bigoplus W^{(1,2)}_{6,\mathcal{A}_0,\mathcal{B}_2}\right)\right.,\\
&\ \left.\left(W^{(0,2)}_{4,\mathcal{A}_0,\mathcal{B}_0}\bigoplus W^{(1,2)}_{6,\mathcal{A}_0,\mathcal{B}_2}\right)\bigoplus \left(W^{(1,2)}_{4,\mathcal{A}_0,\mathcal{B}_0}\bigoplus W^{(1,2)}_{5,\mathcal{A}_0,\mathcal{B}_1}\right)\right]\\
\end{split}\end{eqnarray*}
We can check that the total number of signal is $S=\left(t_1-\left\lceil\frac{t_2(K_1-t_1)}{K_2-t_2}\right\rceil\right){K_1 \choose t_1}{K_2 \choose t_2+1}\times t_2$.
%\begin{eqnarray}
%P=Y\cdot G=
%\left(
%  \begin{array}{ccccc}
%    Y^{(0,l)}_{\mathcal{A}_0,\mathcal{S}} & Y^{(1,l)} & \cdots & Y^{(t_2-1,l)} & Y^{(t_2,l)} \\
%    Y^{(0,l+1)} & Y^{(1,l+1)} & \cdots & Y^{(t_2-1,l+1)} & Y^{(t_2,l+1)} \\
%    \vdots & \vdots & \ddots & \vdots & \vdots \\
%    Y^{(0,t_1-2)} & Y^{(1,t_1-2)} & \cdots & Y^{(t_2-1,t_1-2)} & Y^{(t_2,t_1-2)} \\
%    Y^{(0,t_1-1)} & Y^{(1,t_1-1)} & \cdots & Y^{(t_2-1,t_1-1)} & Y^{(t_2,t_1-1)} \\
%  \end{array}
%\right)\cdot G
%\end{eqnarray}
%So the signal as following
%
%\begin{eqnarray}
%p(i)\cdot g(j) \ \ \ i=0,1,\ldots,t_1-l-1, \ \ j=0,1,\ldots,t_2-1
%\end{eqnarray}
%where $i$-th row of matrix $P$ is denoted $p(i)$ and $j$-th column of matrix $G$ is denoted $g(j)$.
\end{itemize}

Assume that request vector ${\bf d}=(0,1,\ldots,K_1,\ldots, K_1+K_2-1)$. Finally let us consider the request for each user. Without loss of generality, it is sufficient to consider the requests of the user $0$ from $\mathcal{K}_1$ and the user $K_1$ from $\mathcal{K}_2$ respectively based on Step1 and Step2. Let us consider the request of user $0$ first. We claim user $0$ can decode all the sub-packets $$
W^{(l_2,l_1)}_{0,\mathcal{S}_1\setminus\{0\},\mathcal{B}}, \  0\in\mathcal{S}_1\in \mathfrak{S}_1,\ \mathcal{B}\in \mathfrak{B}, \ l_1\in[1,t_1),\ l_2\in [0,t_2),$$ which are not cached, in step1. It is not difficult to check that each $W^{(l_2,l_1)}_{0,\mathcal{S}_1}$ occurs in at most one vertex of $\mathcal{X}$ of graph $\mathcal{G}'$. From Proposition \ref{pro-baohe-X} there exists a the saturating matching for $\mathcal{X}$. So $W^{(l_2,l_1)}_{0,\mathcal{S}_1}$ occurs in at most one edge of the saturating matching. From Proposition \ref{le-proof-pair}, user $0$ can decode each required sub-packet.

Now let us consider user $K_1$'s request. We know that user $K_1$ requests the following sub-packets
$$
W^{(l_2,l_1)}_{K_1,\mathcal{A},\mathcal{S}_2\setminus\{K_1\}}, \  K_1\in\mathcal{S}_2\in \mathfrak{S}_2,\ \mathcal{A}\in \mathfrak{A}, \ l_1\in[1,t_1),\ l_2\in [0,t_2),$$ which are not cached. We also have that each $W^{(l_2,l_1)}_{K_1,\mathcal{A},\mathcal{S}_2\setminus\{K_1\}}$ occurs in at most one vertex of $\mathcal{Y}$ of graph $\mathcal{G}$. We discuss this case by the following two subcases by the definition of induced subgraph $\mathcal{G}'$.
\begin{itemize}
\item $l_1\in \left[0,\left\lceil\frac{t_2(K_1-t_1)}{K_2-t_2}\right\rceil\right)$: Similarly, $W^{(l_2,l_1)}_{K_1,\mathcal{A},\mathcal{S}_2\setminus\{K_1\}}$ occurs in at most one edge of the saturating matching of the induced subgraph $\mathcal{G}'$. And user $K_1$ can decode all the required sub-packets occurring in the edge of the saturating matching. By the proof Proposition  \ref{le-proof-pair}, the left required sub-packets occurring in the vertex of $\mathcal{Y}'$ can also be decoded.

\item $l_1\in \left[\left\lceil\frac{t_2(K_1-t_1)}{K_2-t_2}\right\rceil,t_1\right)$: For each $l_1\in \left[\left\lceil\frac{t_2(K_1-t_1)}{K_2-t_2}\right\rceil,t_1\right)$,  $K_1\in\mathcal{S}_2\in \mathfrak{S}_2$ and $\mathcal{A}\in \mathfrak{A}$, user $K_1$ receives $t_2$ coded signals
\begin{eqnarray}
\label{eq-proof-signal-k2-3}
\left(Y^{(0,l_1)}_{\mathcal{A},\mathcal{S}_2},Y^{(1,l_1)}_{\mathcal{A},\mathcal{S}_2},\ldots,
Y^{(t_2,l_1)}_{\mathcal{A},\mathcal{S}_2}\right)\cdot \mathbf{G}.
\end{eqnarray}
Since $K_1$ is the minimum integer in $\mathcal{S}_2$, the coded signal $Y^{(0,l_1)}_{\mathcal{A},\mathcal{S}_2}$ can be get by user $K_1$, and there exists a unique required sub-packet in each $Y^{(j,l_1)}_{\mathcal{A},\mathcal{S}_2}$, $j\in[1,t_2]$, by \eqref{eq-ex-step1-1}. Since the rank of the matrix $\mathbf{G}$ in \eqref{eq-proof-signal-k2-3} is $t_2$, we can get all the coded signals $Y^{(j,l_1)}_{\mathcal{A},\mathcal{S}_2}$, $j\in[1,t_2]$. By the proof Proposition  \ref{le-proof-pair},  the required sub-packets in $Y^{(j,l_1)}_{\mathcal{A},\mathcal{S}_2}$ can be decoded by user $K_1$.
\end{itemize}

%Thus the total transmissions amount to $S$ signals when $M_{\mathcal{K}_1}=M_{\mathcal{K}_2}$, then the rate $R(K_1,K_2,M_{\mathcal{K}_1},M_{\mathcal{K}_1},N)$ is
%\begin{eqnarray}
%\label{eq-proof-m1=m2}
%\begin{split}
%\frac{S}{t_1t_2{K_1 \choose t_1}{K_2 \choose t_2}}
%&=\frac{(K_1-t_1){K_1 \choose t_1}{K_2 \choose t_2}}{t_1t_2{K_1 \choose t_1}{K_2 \choose t_2}}\\
%&=\frac{(K_1-t_1)}{t_1}
%\end{split}
%\end{eqnarray}
From Setp1 and Step 2, the total times of the transmissions is $$|\mathcal{Y}'|+S=\left\lceil\frac{t_2(K_1-t_1)}{K_2-t_2}\right\rceil(K_2-t_2){K_1 \choose t_1}{K_2 \choose t_2}+\left(t_1-\left\lceil\frac{t_2(K_1-t_1)}{K_2-t_2}\right\rceil\right){K_1 \choose t_1}{K_2 \choose t_2+1}\times t_2$$ when $M_{\mathcal{K}_1}\geq M_{\mathcal{K}_2}$. Then, the rate
\begin{eqnarray}
\label{eq-proof-m1-m2}
\begin{split}
R(K_1,K_2;M_{\mathcal{K}_1},M_{\mathcal{K}_1};N)&=\frac{|\mathcal{Y}'|+S}{t_1t_2{K_1 \choose t_1}{K_2 \choose t_2}}
=\frac{\left\lceil\frac{t_2(K_1-t_1)}{K_2-t_2}\right\rceil(K_2-t_2){K_1 \choose t_1}{K_2 \choose t_2}+(t_1-\left\lceil\frac{t_2(K_1-t_1)}{K_2-t_2}\right\rceil){K_1 \choose t_1}{K_2 \choose t_2+1}t_2}{t_1t_2{K_1 \choose t_1}{K_2 \choose t_2}}\\
&=\frac{\left\lceil\frac{t_2(K_1-t_1)}{K_2-t_2}\right\rceil(K_2-t_2)+(t_1-\left\lceil\frac{t_2(K_1-t_1)}{K_2-t_2}\right\rceil)\frac{K_2-t_2}{t_2+1}t_2}{t_1t_2}\\
&\leq\frac{\left[\frac{t_2(K_1-t_1)}{K_2-t_2}+1\right](K_2-t_2)+\left[t_1-\frac{t_2(K_1-t_1)}{K_2-t_2}\right]\frac{K_2-t_2}{t_2+1}t_2}{t_1t_2}\\
&=\frac{K_1-t_1}{t_1}+\frac{K_2-t_2}{t_1t_2}+\frac{K_2-t_2}{(t_2+1)}-\frac{(K_1-t_1)t_2}{(t_2+1)t_1}\\
&=\frac{(K_1-t_1)}{(t_2+1)t_1}+\frac{(K_2-t_2)(t_1t_2+t_2+1)}{(t_2+1)t_1t_2}
\end{split}
\end{eqnarray}

It is worth noting that when $M_{\mathcal{K}_1}=M_{\mathcal{K}_2}$, we have  $\lceil\frac{t_2(K_1-t_1)}{K_2-t_2}\rceil=t_1$. Submitting this equality into \eqref{eq-proof-m1-m2} we have
\begin{eqnarray}
\label{eq-proof-m1=m2}
R(K_1,K_2;M_{\mathcal{K}_1},M_{\mathcal{K}_1};N)=\frac{K_1-t_1}{t_1}.
\end{eqnarray} Similarly, when $M_{\mathcal{K}_1}< M_{\mathcal{K}_2}$, we have
\begin{eqnarray}
\label{eq-proof-m2-m1}
R(K_1,K_2;M_{\mathcal{K}_1},M_{\mathcal{K}_1};N)\leq\frac{(K_2-t_2)}{(t_1+1)t_2}+\frac{(K_1-t_1)(t_1t_2+t_1+1)}{(t_1+1)t_1t_2}.
\end{eqnarray}
By \eqref{eq-proof-m1-m2} and \eqref{eq-proof-m2-m1}, for any positive integers $K_i$, $N$ and for each $\frac{M_{\mathcal{K}_i}}{N}=\frac{t_i}{K_i}$, $1\leq t_i<K_i$, $i=1,2$, there exists a scheme for $(K_1,K_2;M_{\mathcal{K}_1},M_{\mathcal{K}_2};N)$ caching system
% where each file is divided into $t_1t_2{K_1 \choose t_1}{K_2 \choose t_2}$ equal packets and
 with transmission rate
\begin{eqnarray*}
R(K_1,K_2;M_{\mathcal{K}_1},M_{\mathcal{K}_2};N)\begin{cases}
\leq\frac{(K_1-t_1)(t_1t_2+t_1+1)}{(t_1+1)t_1t_2}+\frac{(K_2-t_2)}{(t_1+1)t_2},&\text{if $M_{\mathcal{K}_1}<M_{\mathcal{K}_2}$}\\[0.2cm]
\leq\frac{(K_1-t_1)}{(t_2+1)t_1}+\frac{(K_2-t_2)(t_1t_2+t_2+1)}{(t_2+1)t_1t_2},&\text{if $M_{\mathcal{K}_1}\geq M_{\mathcal{K}_2}$}
%=\frac{K_1-t_1}{t_1},&\text{if $M_{\mathcal{K}_1}=M_{\mathcal{K}_2}$}\\
\end{cases}
\end{eqnarray*}
for a $(K_1,K_2;\mathcal{K}_1,\mathcal{K}_2;N)$ caching system where $t_i=\frac{M_{\mathcal{K}_i} K_i}{N}$.
\end{proof}

 \end{document}